\documentclass[11pt,USletter]{article}
\usepackage{fullpage}

\usepackage[utf8]{inputenc}
\usepackage{amsthm,amsmath,amssymb}
\usepackage{thmtools}
\usepackage{subcaption,thm-restate}
\usepackage[mathcal,mathscr]{eucal}
\usepackage{epsfig,graphicx,graphics,color}
\usepackage{caption}
\usepackage{bm}
\usepackage{enumerate}
\usepackage[sort,nocompress]{cite}
\usepackage{hyperref}
\hypersetup{
    colorlinks=true,       
    linkcolor=blue,        
    citecolor=red,         
    filecolor=magenta,     
    urlcolor=cyan,         
    linktocpage=true
}

\usepackage{algorithm}
\usepackage{makecell,multirow}
\usepackage{array}
\usepackage{bbm}
\usepackage[noend]{algpseudocode}
\usepackage{varwidth}
\newcounter{algsubstate}
\makeatletter

\makeatother

\theoremstyle{plain}
\newtheorem{thm}{Theorem}

\newtheorem{cl}[thm]{Claim}

\newtheorem{qu}[thm]{Question}

\theoremstyle{definition}

\newtheorem{rem}[thm]{Remark}

\def\final{0}  
\def\iflong{\iffalse}
\ifnum\final=0  
\usepackage[dvipsnames]{xcolor}
\newcommand{\kristof}[1]{{\color{red}[{\tiny \textbf{Kristóf:} \bf #1}]\marginpar{\color{red}*}}}
\newcommand{\erika}[1]{{\color{red}[{\tiny \textbf{Erika:} \bf #1}]\marginpar{\color{red}*}}}
\newcommand{\bendre}[1]{{\color{red}[{\tiny \textbf{Endre:} \bf #1}]\marginpar{\color{red}*}}}
\newcommand{\kaz}[1]{{\color{red}[{\tiny \textbf{Kaz:} \bf #1}]\marginpar{\color{red}*}}}
\newcommand{\kavitha}[1]{{\color{red}[{\tiny \textbf{Kavitha:} \bf #1}]\marginpar{\color{red}*}}}
\newcommand{\yusuke}[1]{{\color{red}[{\tiny \textbf{Yusuke:} \bf #1}]\marginpar{\color{red}*}}}
\newcommand{\naoyuki}[1]{{\color{red}[{\tiny \textbf{Naoyuki:} \bf #1}]\marginpar{\color{red}*}}}
\else 
\newcommand{\kristof}[1]{}
\newcommand{\erika}[1]{}
\newcommand{\bendre}[1]{}
\newcommand{\kaz}[1]{}
\newcommand{\kavitha}[1]{}
\newcommand{\yusuke}[1]{}
\newcommand{\naoyuki}[1]{}
\fi


\newcommand{\gse}{EFX$^+_-$}
\newcommand{\gese}{EFX$^0_-$}
\newcommand{\gsee}{EFX$^+_0$}
\newcommand{\gesee}{EFX$^0_0$}

\newcommand\ChangeRT[1]{\noalign{\hrule height #1}}

\newcommand{\PreserveBackslash}[1]{\let\temp=\\#1\let\\=\temp}
\newcolumntype{C}[1]{>{\PreserveBackslash\centering}p{#1}}
\newcolumntype{R}[1]{>{\PreserveBackslash\raggedleft}p{#1}}
\newcolumntype{L}[1]{>{\PreserveBackslash\raggedright}p{#1}}

\makeatletter
\let\@fnsymbol\@arabic
\makeatother

\makeatletter
\def\namedlabel#1#2{\begingroup
    #2%
    \def\@currentlabel{#2}%
    \phantomsection\label{#1}\endgroup
}
\makeatother


\title{Envy-free Relaxations for Goods, Chores, and Mixed Items}

\author{
Kristóf Bérczi\thanks{MTA-ELTE Egerváry Research Group, Department of Operations Research, Eötvös Loránd University, Budapest, Hungary. Email: \texttt{berkri@cs.elte.hu, koverika@cs.elte.hu}.}
\and
Erika R. Bérczi-Kovács\footnotemark[1]
\and
Endre Boros\thanks{MSIS Department and RUTCOR, Rutgers University, New Jersey, USA. Email: \texttt{endre.boros@rutgers.edu}.}
\and
Fekadu Tolessa Gedefa\thanks{Department of Operations Research, Eötvös Loránd University, Budapest, Hungary. Email: \texttt{fekadu\_tolessa@slu.edu.et}.}
\and
Naoyuki Kamiyama\thanks{Institute of Mathematics for Industry, Kyushu University, Fukuoka, Japan, and JST, PRESTO, Kawaguchi, Japan. Email: \texttt{kamiyama@imi.kyushu-u.ac.jp}.}
\and
Telikepalli Kavitha\thanks{School of Technology and Computer Science, Tata Institute of Fundamental Research, Mumbai, India. Email: \texttt{kavitha@tifr.res.in}.}
\and
Yusuke Kobayashi\thanks{Research Institute for Mathematical Sciences (RIMS) Kyoto University, Kyoto, Japan. Email: \texttt{yusuke@kurims.kyoto.ac.jp, makino@kurims.kyoto.ac.jp}.}
\and
Kazuhisa Makino\footnotemark[6]
}

\begin{document}
\maketitle

\begin{abstract}
  In fair division problems, we are given a set $S$ of $m$ items and a set $N$ of $n$ agents with individual preferences, and the goal is to find an allocation of items among agents so that each agent finds the allocation fair. There are several established fairness concepts and envy-freeness is one of the most extensively studied ones. However envy-free allocations do not always exist when items are indivisible and this has motivated relaxations of envy-freeness: envy-freeness up to one item (EF1) and envy-freeness up to any item (EFX) are two well-studied relaxations. We consider the problem of finding EF1 and EFX allocations for utility functions that are not necessarily monotone, and propose four possible extensions of different strength to this setting. 
  
  In particular, we present a polynomial-time algorithm for finding an EF1 allocation for two agents with arbitrary utility functions. An example is given showing that EFX allocations need not exist for two agents with non-monotone, non-additive, identical utility functions. However, when all agents have monotone (not necessarily additive) identical utility functions, we prove that an EFX allocation of chores always exists. As a step toward understanding the general case, we discuss two subclasses of utility functions: Boolean utilities that are $\{0,+1\}$-valued functions, and negative Boolean utilities that are $\{0,-1\}$-valued functions. For the latter, we give a polynomial time algorithm that finds an EFX allocation when the utility functions are identical.

\medskip

\noindent \textbf{Keywords:} Fair division, Indivisible items, Envy-freeness, EF1, EFX, Non-monotone utility function, Non-additive utility function
\end{abstract}

\section{Introduction} \label{sec:intro}

Fair division of items among competing agents is an important and well-studied problem in Economics and Computer Science. There is a set $S$ of $m$ indivisible items and a set $N$ of $n$ agents with individual preferences, and the goal is to find an allocation of items among agents so that each agent finds the allocation fair. There are several fairness concepts in the literature that use different metrics to measure the degree of equity or fairness. Among them, the possibly most compelling one is envy-freeness (EF). Every agent has a utility value for each subset of items: the $i$-th agent has a utility function $u_i: 2^S \rightarrow \mathbb{R}$. An allocation is envy-free if each agent finds the utility of her bundle at least as much as that of any other agent. Although envy-freeness provides a very natural criterion for the fairness of an allocation, such a solution rarely exists as shown by the following simple example: just consider an instance with two agents and a single item having positive utility for both of them. In order to circumvent the possible non-existence of envy-free allocations, two natural relaxations of envy-freeness have been studied.

In the setting where all utilities are non-negative (items are called {\em goods} in such a case) 
Lipton et al. \cite{lipton2004approximately} and Budish \cite{budish2011combinatorial} introduced the notion of envy-freeness up to one good (EF1) which allows the presence of envies between agents -- however any envy must be removable by deleting some good from the envied agent's bundle.  
Note that no good is really removed from the envied agent's bundle, this is just a thought experiment to measure the amount of envy the envious agent has towards the envied agent.

EF1 may seem too weak as a relaxation of envy-freeness. Consider the simple example given below where there are three goods $a, b, c$ and two agents with {\em additive}\footnote{For \emph{additive} utilities, the utility of a subset $X$ of items is the sum of utilities of items in $X$.} utilities. Both agents value $c$ twice as much as $a$ or $b$. The allocation where agent~1 gets $\{a\}$ and agent~2 gets $\{b,c\}$ is EF1 -- however this allocation does not seem fair towards agent~1.

\begin{center}
\begin{minipage}[b]{0.3\linewidth}
\centering
\begin{eqnarray*}
  \setlength{\arraycolsep}{0.5ex}\setlength{\extrarowheight}{0.25ex}
\begin{array}{@{\hspace{1ex}}c@{\hspace{1ex}}||@{\hspace{1ex}}c@{\hspace{1ex}}|@{\hspace{1ex}}c@{\hspace{1ex}}|@{\hspace{1ex}}c@{\hspace{1ex}}|@{\hspace{1ex}}c@{\hspace{1ex}}}
    \  & a \ & b \ & c \ \\[.5ex] \hline
    Agent~1 \ & 1 \ & 1 \ & 2 \ \\[.5ex] \hline
    Agent~2 \ & 1 \ & 1 \ & 2 \ \\[.5ex] 
\end{array}
\end{eqnarray*}
\end{minipage}
\end{center}

As a less permissive relaxation, Caragiannis et al. \cite{caragiannis2016unreasonable} proposed envy-freeness up to any good (EFX), which requires that an agent's envy towards another bundle can be eliminated by removing {\em any} good from the envied bundle. Observe the subtle difference between the above definitions: when utility functions are  additive, EF1 requires that for any pair of agents $i,j$ agent~$i$'s envy towards agent~$j$ can be eliminated by removing agent~$i$'s most valued good from agent~$j$'s bundle, while EFX requires that the same should hold even after removing agent~$i$'s least valued good from agent $j$'s bundle. In the example given above, the allocation where one agent gets $\{a,b\}$ and the other agent gets $\{c\}$ is the only EFX allocation. Thus EFX, though strictly weaker than EF, is strictly stronger than EF1. As remarked in \cite{CGH19}: ``{\em Arguably, EFX is the best fairness analog of envy-freeness of indivisible items}''.

Most works on fair division focused on allocation of `goods', i.e., utilities are always non-negative and monotone: the utility of any set is at least as much as each of its subsets. However, in practice, it might easily happen that an item is a {\em chore}, i.e., this is a task or responsibility that all agents find cumbersome, so the utility of a chore is zero or negative -- hence utilities are no longer monotone. We can also have {\em mixed} items: so an item has positive utility for some agent and negative for another agent. Furthermore, the utility functions are not necessarily additive, hence the utility of a subset of items might be completely independent of the utilities of the individual items within. 

Aziz et al. \cite{aziz2019fair} initiated the study of fair allocations of indivisible goods and chores and extended the above fairness concepts to deal with chores, mixed items and more general utility functions. Our work continues to discover this direction by showing the existence or non-existence of fair solutions in various settings. We also focus on the setting where all items are chores: so all utilities are non-positive here. As in the case of goods, envy-free allocations of chores need not exist. Consider the simple example with two agents and a single chore that neither agent wants to do: one of the agents has to be assigned this chore and she envies the other agent who is assigned no chore. 

We formally define EF1 and EFX allocations in the generalized setting of goods and chores in Section~\ref{sec:prelim}. Roughly speaking, an allocation $\pi = \langle \pi(1),\ldots,\pi(n)\rangle$ is EF1 if for any pair of agents $i,j$: either (1)~$i$ does not envy $j$'s bundle, or (2)~there is some item $s \in \pi(i) \cup \pi(j)$ such that $i$ likes $\pi(i)-s$ at least as much as $\pi(j) - s$. Similarly, $\pi$ is EFX if for any pair of agents $i,j$: (1)~$i$ does not envy $j$'s bundle, or (2)~$i$ likes $\pi(i)$ at least as much as $\pi(j) - s$ for any `good' $s \in \pi(j)$ and $i$ likes $\pi(i) - s$ at least as much as $\pi(j)$ for any `chore' $s \in \pi(i)$. Note that in the general case, whether an item $s$ is a good or chore in $X \subseteq S$ is determined by the marginal utility that $s$ brings to the set $X - s$. We refer to Section~\ref{sec:prelim} for details.

\subsection{Previous Work}

We only discuss works that are closely related to our results, for a detailed survey of the fair division literature we refer the reader to \cite{brams1996fair,brandt2016handbook}. 

\paragraph{EF1 allocations.} Envy-freeness and its relaxations were mainly considered for monotone additive utility functions. The idea of envy-freeness up to one good (EF1) implicitly appeared in the paper of Lipton et al.  \cite{lipton2004approximately}, and then was explicitly introduced and analyzed by Budish \cite{budish2011combinatorial}.

When allocating divisible goods, the maximum Nash welfare solution selects an allocation that maximizes the product of utilities and it is known to have strong fairness guarantees; moreover it also satisfies Pareto optimality.\footnote{An allocation is \emph{Pareto optimal} if there is no other allocation which makes some agent strictly better off while making no agent worse off.} Caragiannis et al. \cite{caragiannis2016unreasonable} showed that the maximum Nash welfare solution is fair in the indivisible setting as well -- in the sense that it forms an EF1 allocation. In \cite{barman2018finding}, Barman et al. developed a pseudo-polynomial time algorithm for finding EF1 allocations of goods that are also Pareto efficient, and showed that there always exists an allocation that is EF1 and fractionally Pareto efficient.

Much less is known when the utility functions are non-additive or non-monotone. Caragiannis et al. \cite{caragiannis2012efficiency} considered allocation problems in which a set of divisible or indivisible goods or chores has to be allocated among several agents. Aziz et al. \cite{aziz2016optimal} focused on additive cardinal utility functions, and presented computational hardness results as well as  polynomial-time algorithms for testing Pareto optimality for subclasses of utility functions. Barman and Murthy \cite{barman2017approximation} considered additive and submodular utilities for goods, and provided fairness guarantees in terms of the maximin share, i.e., the maximum value that an agent can ensure for herself by partitioning the goods into $n$ bundles, and receiving a minimum valued bundle. Benabbou et al. \cite{benabbou2020finding} considered the allocation of indivisible goods to agents that have monotone, submodular, non-additive utility functions. They showed that utilitarian socially optimal (hence Pareto optimal), leximin, and maximum Nash welfare allocations are all EF1 if in addition the utility functions have binary marginal gain.

Bogomolnaia et al. \cite{bogomolnaia2019dividing} compared the performance of the Competitive and the Egalitarian division rules for problems involving divisible goods and chores. Recently, Aziz et al. \cite{aziz2019fair} proposed a similar scenario in the indivisible setting, and gave a generalization of the decentralized round robin algorithm that finds an EF1 allocation when the utilities are additive. They also presented a different polynomial-time algorithm that returns an EF1 allocation even when the utility functions are arbitrary. However, there is a flaw in the proof of the latter result, thus the existence of an EF1 allocation in the non-monotone, non-additive setting is still open even for identical utility functions. 

\begin{table}[htbp] 
\small
    \centering
    {\setlength{\extrarowheight}{4pt}
    \begin{tabular}{!{\vrule width 1.5pt}C{0.08\linewidth}
                    !{\vrule width 1pt}C{0.18\linewidth}
                    |C{0.16\linewidth}
                    !{\vrule width 1pt}C{0.18\linewidth}
                    |C{0.26\linewidth}
                    !{\vrule width 1.5pt}}
        \ChangeRT{1.5pt}
        \multirow{2}{*}{Utilities} & \multicolumn{2}{c!{\vrule width 0.8pt}}{Additive} & \multicolumn{2}{c!{\vrule width 1.5pt}}{Non-additive}\\ [3pt] 
        \cline{2-5}
         & Monotone & Non-monotone & Monotone & Non-monotone \\[3pt]
        \ChangeRT{1pt}
        Identical     & \multirow{2}{*}[-1em]{Exists$^{\mbox{\tiny\cite{lipton2004approximately}}}$} & \multirow{2}{*}[-1em]{Exists$^{\mbox{\tiny\cite{aziz2019fair}}}$} & \multirow{2}{*}[-1em]{Exists$^{\mbox{\tiny\cite{lipton2004approximately}}}$} & \makecell{Open\\ \textbf{\bm{$n=2$} (Thm.~\ref{thm:ef1})}\\\textbf{Boolean (Thm. \ref{thm:boolean})}\\\textbf{Neg. Boolean} \textbf{(Thm. \ref{thm:negboolean})}} \\[3pt]
        \cline{1-1}\cline{5-5}
        Non-identical &  &  &  & \makecell{Open\\ \textbf{\bm{$n=2$} (Thm.~\ref{thm:ef1})}\\\textbf{Boolean (Thm. \ref{thm:boolean})}}  \\[5pt]
        \ChangeRT{1.5pt}
    \end{tabular}}
\caption{Landscape of results on EF1 allocations, where bold letters represents the results obtained in the present paper. A utility function $u$ is called \emph{Boolean} if $u(X)\in\{0,1\}$ for every $X\subseteq S$, and \emph{negative Boolean} if $-u$ is Boolean.}
\label{tab:ef1res}
\end{table}

\paragraph{EFX allocations.} 
As a strictly stronger fairness criterion than EF1, \cite{caragiannis2016unreasonable} introduced envy-freeness up to any good (EFX) and provided a connection to pairwise maximum share guarantee for additive utilities over indivisible goods. Plaut and Roughgarden \cite{plaut2018almost} investigated a stronger variant of EFX, referred to as EFX$_0$ in \cite{kyropoulou2019almost}, for monotone utilities in the setting of goods. 
They proved that an EFX$_0$ solution always exists when the utility functions are identical. This result yields a protocol that produces an EFX$_0$ allocation of goods for two players with general and possibly distinct utilities. Barman and Murthy \cite{barman2017approximation} extended their result by proving the existence of EFX$_0$ allocations for additive utility functions when all agents have the same preference toward goods. In a recent paper, Chaudhury et al. \cite{chaudhury2020efx} verified the the existence of an EFX$_0$ solution for three agents with monotone, additive, non-identical utilities. For monotone submodular utilities with binary marginal gains, Benabbou et al.\cite{benabbou2020finding} noted that EF1 implies EFX, thus their results guarantee the existence of EFX allocations (that are Pareto optimal as well) for any such instance. However, they showed that even an EF1 and utilitarian optimal allocation may violate the EFX$_0$ condition. Amanatidis et al. \cite{amanatidis2020maximum} studied $2$-value instances, that is, when there are at most two possible values for the goods, and proved that any allocation that maximizes the Nash welfare is EFX$_0$. They proposed an algorithm called {\it Match \& Freeze} for finding an EFX$_0$ solution that is based on repeatedly computing maximum matchings and freezing certain agents. Furthermore, they gave an algorithm for instances where the utilities of each agent takes values in an interval such that the ratio between the maximum and the minimum value is at most 2.

In general, most of the results on EFX$_0$ allocations of goods cannot be straightforwardly translated into analogous results on allocations of chores. Therefore only a few results are known for the setting of mixed items when both goods and chores are present. A utility function is tertiary if $u(s)\in\{-1,0,+1\}$ for every $s\in S$. It is not difficult to see that for tertiary utilities EF1 implies EFX, thus the \emph{Double Round Robin Algorithm} of \cite{aziz2019fair} provides such a solution. For the same setting, a different algorithm that provides a solution which is Pareto optimal as well was given by Alexandrov and Walsh \cite{aleksandrov2019greedy}. They also showed that an EFX allocation can be found in polynomial time for identical additive utilities. However, these results seems to be difficult to extend to the EFX$_0$ case. Very recently, Chen and Liu \cite{chen2020fairness} analyzed the fairness of leximin solutions in allocation of indivisible chores. Their model contained itemwise monotone utility functions, meaning that each agent $i$ can partition $S$ into sets $G_i$ and $C_i$ such that for any $X\subseteq S$ we have $v_i(X)\leq v_i(X+s)$ if $s\in G_i$ and $v_i(X)\geq v_i(X+s)$ for $s\in C_i$. They verified that a leximin solution is EFX for combinations of goods and chores for agents with identical itemwise monotone utilities. However, their notion of EFX solutions is different from ours and so none of our results are implied by their paper.

Gourv\`es et al. \cite{gourves2014near} and Freeman et al. \cite{freeman2019equitable} studied \emph{equitable} (EQ) allocations\footnote{An allocation $\pi$ is \emph{equitable} if $v_i(\pi(i)) = v_j(\pi(j))$ for all agents $i,j$.} and their relaxations, \emph{equitable up to one good} (EQ1) and \emph{equitable up to any good} (EQX). Freeman et al. \cite{FSVX20} introduced analogous notions of (approximate) equitability of chores. By combining the results of \cite{gourves2014near} and \cite{FSVX20}, an algorithm for finding an EQX allocation of mixed items for additive utilities is at hand. When agents have identical valuations, an allocations satisfies EFX if and only it satisfies EQX, thus this immediately implies the existence of an EFX allocation of mixed items for additive identical valuations. By the protocol in \cite{plaut2018almost}, this result guarantees EFX allocations of mixed items for two agents with additive (not necessarily identical) utilities.

Despite all the efforts made, the existence of an EFX/EFX$_0$ allocation remains an intriguing open question for at least four agents with distinct, monotone, additive utility functions.

\begin{table}[htbp]
    \small
    \centering
    {\setlength{\extrarowheight}{4pt}
    \begin{tabular}{!{\vrule width 1.5pt}C{0.08\linewidth}
                    !{\vrule width 1pt}C{0.18\linewidth}
                    |C{0.16\linewidth}
                    !{\vrule width 1pt}C{0.18\linewidth}
                    |C{0.26\linewidth}
                    !{\vrule width 1.5pt}}
        \ChangeRT{1.5pt}
        \multirow{2}{*}{Utilities} & \multicolumn{2}{c!{\vrule width 0.8pt}}{Additive} & \multicolumn{2}{c!{\vrule width 1.5pt}}{Non-additive}\\ [3pt] 
        \cline{2-5}
         & Monotone & Non-monotone & Monotone & Non-monotone \\[3pt]
        \ChangeRT{1pt}
        Identical     & \makecell{Exists$^{\mbox{\tiny\cite{plaut2018almost}}}$ (goods)\\Exists$^{\mbox{\tiny\cite{gourves2014near}}}$ (chores)} & Exists$^{\mbox{\tiny\cite{gourves2014near,FSVX20}}}$ & \makecell{Exists$^{\mbox{\tiny\cite{plaut2018almost}}}$ (goods)\\\textbf{Exists (Thm. \ref{thm:chores-identical-val})}\\\textbf{(chores)}} & \textbf{\makecell{Not exists (Thm. \ref{thm:nm_na_i})\\Boolean (Thm. \ref{thm:boolean})\\Neg. Boolean (Thm. \ref{thm:negboolean})}}\\ [3pt]
        \cline{1-5}
        Non-identical & \makecell{Open\\$n=3^{\mbox{\tiny\cite{chaudhury2020efx}}}$ \\ Tertiary$^{\mbox{\tiny\cite{aziz2019fair}}}$} & \makecell{Open\\$n=2^{\mbox{\tiny\cite{gourves2014near,FSVX20}}}$\\ Tertiary$^{\mbox{\tiny\cite{aziz2019fair}}}$} & \makecell{Open\\$n=2^{\mbox{\tiny\cite{plaut2018almost}}}$ (goods)\\$(0,1)$-SUB$^{\mbox{\tiny\cite{benabbou2020finding}}}$\\\textbf{\bm{$n=2$} (Cor. \ref{cor:chores-two-agents})}\\\textbf{(chores)}} & \textbf{\makecell{Not exists (Thm. \ref{thm:nm_na_i})\\Boolean (Thm. \ref{thm:boolean})}} \\ [5pt]
        \ChangeRT{1.5pt}
    \end{tabular}}
     \caption{Landscape of results on different variants of EFX allocations, where bold letters represents the results obtained in the present paper. A utility function $u$ is called \emph{Boolean} if $u(X)\in\{0,1\}$ for every $X\subseteq S$, and \emph{negative Boolean} if $-u$ is Boolean.}
\label{tab:efxres}
\end{table}

\subsection{Our Results}

We study the existence or non-existence of EF1 and EFX allocations for non-monotone utility functions. Aziz et al. \cite{aziz2019fair} presented a polynomial-time algorithm that returns an EF1 allocation for arbitrary utilities, but there is a flaw in the proof (an example is described in Section~\ref{sec:ef1}). Our first main result is a polynomial-time algorithm that finds an EF1 allocation for two agents with arbitrary (i.e. non-monotone, non-additive, non-identical) utility functions.

\begin{restatable}{thm}{thmefo}
\label{thm:ef1}
There always exists an EF1 allocation for two agents with arbitrary (not necessarily monotone or additive) utility functions. Such an allocation can be computed in polynomial time.
\end{restatable}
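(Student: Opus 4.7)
The plan is to adapt the classical cut-and-choose protocol to the non-monotone, non-additive setting. Specifically, the proof proceeds in two stages: (i) show that agent~1 can always construct a partition $(A,B)$ of $S$ such that, no matter which of the two bundles she is eventually assigned, the resulting allocation is EF1 from her perspective---call such a partition \emph{1-EF1} for agent~1; and (ii) let agent~2 select her preferred bundle from $\{A,B\}$, leaving the other to agent~1. Step (ii) makes agent~2 envy-free and hence trivially EF1, while the 1-EF1 property guarantees EF1 for agent~1. The entire difficulty is therefore concentrated in the construction of a 1-EF1 partition for agent~1.

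Unpacking the EF1 definition in the mixed goods/chores setting, a partition $(A,B)$ with, WLOG, $u_1(A) \geq u_1(B)$ is 1-EF1 for agent~1 if and only if there exists an item $s \in A \cup B$ with $u_1(B \setminus \{s\}) \geq u_1(A \setminus \{s\})$ (so that agent~1's envy when given $B$ can be eliminated by removing $s$). Rewriting, the subtask is to find $s \in S$ and a partition $(X,Y)$ of $S \setminus \{s\}$ such that $u_1(Y)$ lies between $u_1(X)$ and $u_1(X \cup \{s\})$: item $s$ then acts as an effective envy-eliminator, and $(A,B)$ is recovered by attaching $s$ to whichever of $X$ and $Y$ is needed to make the bigger bundle.

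To establish existence and to obtain an algorithm, I would consider a chain $\emptyset = C_0 \subset C_1 \subset \cdots \subset C_m = S$ built by successively adding items, and track the signed gap $\phi_i := u_1(C_i) - u_1(S \setminus C_i)$. Because $\phi_0 = -\phi_m$, the sequence changes sign at some index $k$, and the partitions $(C_{k-1}, S\setminus C_{k-1})$ and $(C_k, S\setminus C_k)$ are natural candidates; in the monotone case, taking $s=s_k$ as the removed item would suffice. The main obstacle is that without monotonicity or additivity the values $u_1(S \setminus C_{k-1})$ and $u_1(S \setminus C_k)$ can differ arbitrarily even though they are obtained from one another by adding or deleting a single element, so the naive crossover argument fails: the sandwich condition need not hold at the chain transition. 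I plan to overcome this either by (a)~choosing the item ordering adaptively---greedily selecting the next item to add according to the current marginal of agent~1 so that the crossover item has controllable behavior---or by (b)~running the chain argument in parallel over $O(m)$ carefully chosen chains (for instance, one rooted at each possible starting item, or alternating ascending/descending chains) and arguing, via a case analysis on the sign structure of $\phi_i$, that at least one of them produces a valid $(s,X,Y)$ witness. Since each chain examines $O(m)$ partitions with $O(m)$ candidate removal items, and only polynomially many chains are considered, the total number of utility queries is polynomial in $m$, yielding the claimed polynomial-time algorithm.
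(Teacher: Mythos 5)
Your reduction to constructing a partition that is EF1 for agent~1 regardless of which side she receives, followed by letting agent~2 choose, is sound, and your reformulation of that subtask --- find an item $s$ and a partition $(X,Y)$ of $S\setminus\{s\}$ with $u_1(Y)$ lying between $u_1(X)$ and $u_1(X\cup\{s\})$ --- is exactly the right target. But the proof is not complete: you assert that ``the naive crossover argument fails'' at the sign change of $\phi_i$ and then defer to two unexecuted strategies (adaptive orderings, or a case analysis over $O(m)$ chains) without carrying either out. That is a genuine gap, and moreover the obstacle you identify is illusory: a \emph{fixed, arbitrary} ordering already works, which is precisely what the paper proves.

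The step you are missing is a one-line trichotomy at the crossover. Suppose $u_1(C_{k-1})<u_1(S\setminus C_{k-1})$ and $u_1(C_k)>u_1(S\setminus C_k)$, and let $s=s_k$ be the item with $C_k=C_{k-1}\cup\{s\}$. The two sets $C_{k-1}$ and $S\setminus C_k$ partition $S\setminus\{s\}$, and adding $s$ to either one produces the complement of the other ($C_{k-1}\cup\{s\}=C_k$ and $(S\setminus C_k)\cup\{s\}=S\setminus C_{k-1}$). Now compare $u_1(C_{k-1})$ with $u_1(S\setminus C_k)$. If $u_1(C_{k-1})\le u_1(S\setminus C_k)$, then $u_1(C_{k-1})\le u_1(S\setminus C_k)<u_1(C_k)=u_1(C_{k-1}\cup\{s\})$, so the sandwich holds with $X=C_{k-1}$, $Y=S\setminus C_k$; if instead $u_1(C_{k-1})>u_1(S\setminus C_k)$, then $u_1(S\setminus C_k)<u_1(C_{k-1})<u_1(S\setminus C_{k-1})=u_1((S\setminus C_k)\cup\{s\})$, so it holds with $X=S\setminus C_k$, $Y=C_{k-1}$. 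Your worry that $u_1(S\setminus C_{k-1})$ and $u_1(S\setminus C_k)$ may differ arbitrarily is harmless because the argument never compares them directly --- it only uses the two crossover inequalities plus the comparison of the two ``small'' sides. (You should also note the degenerate case $\phi_i=0$ for some $i$, where the partition $(C_i,S\setminus C_i)$ is already envy-free for agent~1.) With this case analysis inserted, your proof closes and is essentially the paper's: the paper runs the same prefix/suffix chain $F_j=\{1,\dots,j\}$, $L_j=\{j+1,\dots,m\}$ and splits into exactly these two subcases according to whether $u_1(F_j)\le u_1(L_{j+1})$; it handles agent~2 by first disposing of the case where the agents ever disagree on a prefix/suffix split (which yields an envy-free allocation outright), whereas your cut-and-choose step is a clean and equally valid alternative.
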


Prior work on EFX/EFX$_0$ allocations has mainly concentrated on the cases when all items are either goods or chores and accordingly, the notion of EFX solutions was not defined for the mixed setting. It turns out that finding the right generalization for non-additive, non-monotone utility functions is not straightforward. In Section~\ref{sec:prelim}, we propose four possible extensions possessing different characteristics, called \gesee, \gese, \gsee, and \gse\footnote{We remark that \gse and \gese, when restricted to monotone, additive utilities, correspond to the usual notion of EFX and EFX$_0$ solutions, respectively.}; here \gesee\ implies both \gese\ and \gsee, and both \gese\ and \gsee\ imply \gse. The definitions differ in whether items with a marginal value of $0$ are taken into account or not, both on the side of the envious and the envied agents. Unfortunately, Theorem~\ref{thm:ef1} does not generalize even to \gse\ allocations. This is so even when all the utility functions are identical.

\begin{restatable}{thm}{thmnmnai}
\label{thm:nm_na_i}
There need not exist an \gse\ allocation for two agents with non-monotone, non-additive, identical utility functions.
\end{restatable}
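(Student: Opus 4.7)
The plan is to exhibit an explicit three-item instance and rule out \gse\ by a short case analysis. Let $S=\{a,b,c\}$ and let both agents share the symmetric utility function $u$ defined by $u(\emptyset)=0$, $u(\{s\})=-1$ for every singleton, $u(\{s,t\})=1$ for every unordered pair, and $u(S)=2$. This $u$ is non-monotone (both $u(\emptyset)>u(\{a\})$ and $u(\{a\})<u(\{a,b\})$ hold) and non-additive ($u(\{a\})+u(\{b\})=-2\neq 1=u(\{a,b\})$), so the instance fits the hypotheses of the theorem.

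Because $u$ is invariant under permutations of items and the two agents have identical utilities, every allocation reduces, after relabelling, to one of two templates: the extreme split $(\emptyset,S)$ and the singleton/pair split $(\{s\},\{t,u\})$. I would then verify directly that neither template is \gse. In the extreme split, the agent receiving $\emptyset$ has utility $0$ and envies the other, whose bundle has utility $2$; every $s\in S$ is a strict good of the envied bundle because $u(S)=2>1=u(S-s)$, yet $u(\emptyset)=0<1=u(S-s)$, so the removal condition fails, while the envious side has no item to remove. In the singleton/pair split, the singleton-holder has utility $-1$ and envies the pair-holder (utility $1$); the sole item in the envious bundle is a strict chore since $u(\emptyset)=0>-1=u(\{s\})$, but $u(\emptyset)=0<1=u(\{t,u\})$, so that removal condition fails as well.

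The crucial design point is to keep every relevant marginal strictly nonzero, so each failure is witnessed by a strict good on the envied side or a strict chore on the envious side rather than by a zero-marginal item. The values $\pm 1$ and $\pm 2$ above achieve this, which would otherwise be the main pitfall: a violation involving a zero-marginal item would only refute the stronger variants \gese, \gsee, or \gesee, not the weakest \gse. With this care, the two templates exhaust all allocations and none of them is \gse, completing the argument.
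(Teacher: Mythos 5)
Your counterexample is correct: the symmetry of $u$ under permutations of items, together with the identical utilities, legitimately reduces the analysis to the two templates you list, and in each one the envious agent fails condition (ii) of \gse\ on a strictly positive marginal (the extreme split) or a strictly negative marginal (the singleton/pair split). This is essentially the same approach as the paper --- an explicit three-item, two-agent, identical-utility instance ruled out by exhaustive case analysis --- though the paper's asymmetric function has the extra feature of being only one value away from monotone, which yours trades for a smaller case count.
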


In the setting of goods (all utilities are non-negative), it is known~\cite{plaut2018almost} that EFX$_0$ allocations always exist for monotone identical utility functions. We show that in the setting of chores as well, \gsee\ allocations always exist for monotone identical utility functions: note that a monotone utility function $u$ in the setting of chores is monotone non-increasing, i.e., $X \subseteq Y$ implies $u(X) \ge u(Y)$.

Equitable allocations (EQ) and relaxations (EQ1 and EQX) of chores were studied in \cite{FSVX20}. It was shown there that when all utilities are additive and identical then an \gse\ allocation always exists. We strengthen this by showing that an \gsee\ allocation exists for all monotone identical utility functions.

\begin{restatable}{thm}{thmchores}
\label{thm:chores-identical-val}
When all agents have monotone (not necessarily additive) identical utility functions, an \gsee\ allocation of chores always exists.
\end{restatable}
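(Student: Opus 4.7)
The plan is to adapt the leximin$++$ argument of Plaut--Roughgarden, with an adjustment tailored to the chores setting. Because the identical utility $u$ is non-increasing with $u(\emptyset)=0$, no item has strictly positive marginal in any bundle, so the envied-side ``$+$'' part of \gsee\ is vacuous in this setting. Introducing the removal value $r(X):=\min_{s\in X}u(X\setminus\{s\})$ for $X\ne\emptyset$ and $r(\emptyset):=0$, I would first verify that the \gsee\ condition reduces here to the requirement $\min_i r(\pi(i))\ge\max_j u(\pi(j))$.

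I would then consider the allocation $\pi^*$ that lexicographically maximizes the sorted-ascending vector $(r(\pi^*(1)),\dots,r(\pi^*(n)))$, breaking ties by lexicographically minimizing the sorted-descending vector $(u(\pi^*(1)),\dots,u(\pi^*(n)))$. Such $\pi^*$ exists because the allocation space is finite. The claim to prove is that $\pi^*$ is \gsee, and the argument is by contradiction: if some pair $(i,j)$ witnesses a violation, pick $s^*\in\pi^*(i)$ with $u(\pi^*(i)\setminus\{s^*\})=r(\pi^*(i))<u(\pi^*(j))$ and consider $\pi'$ obtained by transferring $s^*$ from $i$ to $j$. Monotonicity immediately yields $r(\pi'(i))\ge r(\pi^*(i))$ since any further removal from $\pi^*(i)\setminus\{s^*\}$ only weakly increases $u$, and on the other side the removal of $s^*$ from $\pi^*(j)\cup\{s^*\}$ is a candidate witness giving the value $u(\pi^*(j))>r(\pi^*(i))$; coupled with $u(\pi'(j))\le u(\pi^*(j))$ by monotonicity, this step aims to force a strict lex-improvement over $\pi^*$.

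The main obstacle is the cross-interaction term $u((\pi^*(j)\setminus\{t\})\cup\{s^*\})$ for $t\in\pi^*(j)$: non-additivity permits this quantity to fall below $r(\pi^*(i))$, which would drag $r(\pi'(j))$ below $r(\pi^*(i))$ and block the direct lex-improvement on the primary vector. The plan to resolve it is to iterate: the bundle $\pi'(j)$ now furnishes a new witnessing pair, and I will argue that along such a chain either the primary ascending $r$-vector strictly lex-improves or, because $\max u$ cannot strictly increase under such transfers and strictly decreases whenever $s^*$ is not $0$-marginal in the receiving bundle, the secondary descending $u$-vector strictly lex-improves; since no infinite improving sequence exists in a finite allocation space, the chain must terminate in a configuration strictly beating $\pi^*$, yielding the contradiction. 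If this iterative argument proves intractable for fully general non-additive $u$, the contingency plan is an induction on $|S|$ whose step reinserts the last item into the bundle $\pi(i^*)$ maximizing $u(\pi(i)\cup\{s\})$, with the technical core again being preservation of \gsee\ under that single insertion, which is likely to require a strengthened inductive invariant.
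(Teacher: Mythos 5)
Your reduction of \gsee\ to the single inequality $\min_i r(\pi^*(i))\ge\max_j u(\pi^*(j))$ is correct for identical monotone non-increasing utilities (it is exactly observation $(\star)$ that the paper also records), and the observation that $r(\pi'(i))\ge r(\pi^*(i))$ after the donor loses $s^*$ is fine. But the proof has a genuine gap, and it is precisely the one you flag yourself: after transferring $s^*$ to $j$, the quantity $r(\pi^*(j)+s^*)=\min_{t}u((\pi^*(j)+s^*)- t)$ can drop all the way down to $u(\pi^*(j)+s^*)$, which may be far below the previous minimum of the $r$-vector, so the primary ascending objective need not improve. Your proposed rescue does not close this. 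First, the supporting claim that ``$\max u$ cannot strictly increase under such transfers'' is false: by monotonicity $u(\pi^*(i)-s^*)\ge u(\pi^*(i))$, so the \emph{donor's} utility weakly increases and can overtake the previous maximum (e.g.\ every bundle has utility $-1$ and $u(\pi^*(i)-s^*)=0$), so the sorted-descending $u$-vector can lex-increase rather than decrease. Second, ``no infinite improving sequence in a finite space'' is only an argument if every step of the chain provably improves one fixed potential; you have not exhibited such a potential, and without one the chain of further transfers may cycle or terminate at an allocation that does not lex-beat $\pi^*$, in which case no contradiction is obtained. The contingency plan (induction reinserting the last item into the bundle maximizing $u(\pi(i)+s)$) has the same missing piece: a single insertion need not preserve \gsee, and you give no invariant that repairs it.

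The paper's proof takes a different and, crucially, a \emph{constructive repair} route that supplies exactly the ingredient your exchange argument lacks. It builds the allocation incrementally from a pool $P$ of unallocated chores: each round adds a chore $s$ to a happiest agent $k$; if \gsee\ breaks, it does \emph{not} move one item to another agent but instead replaces $\pi(k)+s$ by an inclusionwise \emph{minimal envious subset} $\pi'(k)\subset\pi(k)+s$ (one with $u(\pi'(k))<u(\pi(\ell))$ for some $\ell$) and throws the discarded chores back into $P$. Minimality gives $u(X)\ge u(\pi(i))$ for every proper subset $X\subset\pi'(k)$, which is exactly $(\star)$ for agent $k$; the choice of a happiest agent gives $u(\pi'(k))<u(\pi(k))\le r(\pi(i))$ for the other agents. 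Termination follows from a two-level potential: either the number of unallocated chores drops with utilitarian welfare not increasing, or utilitarian welfare strictly decreases. The moral difference is that the paper is willing to break a bundle up (shrink it and return items to the pool) rather than transfer a single item between two agents of a purported optimal allocation; that is what tames the non-additive cross-interaction term $u((\pi^*(j)- t)+s^*)$ that blocks your argument. If you want to persist with an extremal (leximin-style) argument, you would need to prove a Plaut--Roughgarden-type case analysis showing that \emph{some} reallocation --- not necessarily a one-item transfer --- improves your two-level objective, and at present that case analysis is the entire content of the theorem.
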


By using the cut-and-choose-based protocol of Plaut and Roughgarden \cite{plaut2018almost}, the above result guarantees an \gsee\ allocation of chores for two agents with monotone non-identical utility functions. 

\begin{restatable}{cor}{corchorestwoagents}
\label{cor:chores-two-agents}
There always exists an \gsee\ allocation of chores for two agents with monotone (not necessarily additive) utility functions. Such an allocation can be computed in polynomial time.
\end{restatable}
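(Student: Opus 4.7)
The plan is to combine Theorem~\ref{thm:chores-identical-val} with the classical cut-and-choose protocol of Plaut and Roughgarden~\cite{plaut2018almost}. Write $u_1,u_2$ for the two agents' monotone utility functions over the chore set $S$; agent~1 will serve as the cutter and agent~2 as the chooser.

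In the cutting step, I would invoke Theorem~\ref{thm:chores-identical-val} on the auxiliary instance in which both agents have valuation $u_1$. This yields a partition $\{A_1,A_2\}$ of $S$ that is \gsee\ when both copies use $u_1$. Because the two copies share the same utility function, the \gsee\ condition in this auxiliary instance is symmetric in the two bundles: no matter which of $A_1,A_2$ is assigned to agent~1 and which to her clone, the resulting allocation satisfies \gsee. In the choosing step, agent~2 picks a bundle that she prefers under $u_2$, i.e.\ an index $j\in\{1,2\}$ with $u_2(A_j)\ge u_2(A_{3-j})$, and agent~1 receives $A_{3-j}$.

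Verifying \gsee\ of the resulting allocation $\pi=\langle A_{3-j},A_j\rangle$ is then immediate. By her choice, agent~2 has no envy at all toward agent~1, so the \gsee\ condition is vacuous on her side (since a non-envious agent trivially satisfies every EFX variant). Agent~1 evaluates her bundle $A_{3-j}$ against agent~2's bundle $A_j$ using $u_1$, and this is precisely one of the two comparisons certified by Theorem~\ref{thm:chores-identical-val} in the identical-utility instance, so \gsee\ holds on her side as well. For polynomial running time, the only non-trivial ingredient is a polynomial-time construction of the partition from Theorem~\ref{thm:chores-identical-val}; the choosing step requires only two value-oracle calls to $u_2$.

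The only subtle point, given that Section~\ref{sec:prelim} distinguishes the four variants \gesee, \gese, \gsee, and \gse\ according to how zero-marginal chores are treated on the envier's and the envied's sides, is that the specific \gsee\ condition is genuinely symmetric between the two bundles in an identical-utility two-agent instance. This amounts to a direct inspection of the definition, but it is the linchpin that makes the cut-and-choose reduction valid; any asymmetry between the two sides would force us to work harder (or to fall back to a strictly weaker variant) in the non-identical setting.
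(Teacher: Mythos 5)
Your proposal is correct and follows essentially the same route as the paper: invoke Theorem~\ref{thm:chores-identical-val} with both agents given $u_1$ to obtain a partition $\{S_1,S_2\}$, let agent~2 choose her preferred part (so she has no envy), and observe that agent~1's side of the \gsee\ condition is already certified by the identical-utility allocation, which is symmetric in the two bundles. The paper's proof is exactly this cut-and-choose argument, stated slightly more tersely.
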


As a further step towards understanding the general case, we consider special subclasses of utility functions. \emph{Boolean utilities} are non-monotone, non-additive $\{0,1\}$-valued functions. A function $u$ is \emph{negative Boolean} if $-u$ is Boolean. 

\begin{restatable}{thm}{thmboolean}
\label{thm:boolean}
When all agents have Boolean utility functions, an \gese\ allocation always exists.
\end{restatable}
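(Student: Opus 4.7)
The plan is a lexicographic extremality argument. First I would unpack what \gese\ requires in the Boolean setting. Since all utilities lie in $\{0,1\}$, agent $i$ envies agent $j$ exactly when $u_i(\pi(i))=0$ and $u_i(\pi(j))=1$; in this situation every item $s\in\pi(j)$ has non-negative marginal for $i$ (so the superscript $0$ ranges over all of $\pi(j)$), while every item $s\in\pi(i)$ that is a chore for $i$ automatically satisfies $u_i(\pi(i)-s)=1\ge u_i(\pi(j))=1$. Therefore the \gese\ condition reduces to a single clean requirement: for every pair $i,j$, if $u_i(\pi(i))=0$ and $u_i(\pi(j))=1$, then $u_i(\pi(j)-s)=0$ for all $s\in\pi(j)$, i.e., $\pi(j)$ is \emph{$u_i$-minimal}.

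Next, I would pick an allocation $\pi^*$ that first maximizes $h(\pi):=|\{i:u_i(\pi(i))=1\}|$, the number of happy agents, and, subject to that, minimizes $\sigma(\pi):=\sum_{i:\,u_i(\pi(i))=1}|\pi(i)|$, the total size of happy bundles. Such a $\pi^*$ exists because there are only finitely many allocations. The claim is that any such $\pi^*$ satisfies the reduced condition above. Assume for contradiction that it does not: there exist $i,j$ with $u_i(\pi^*(i))=0$, $u_i(\pi^*(j))=1$ and some $s\in\pi^*(j)$ with $u_i(\pi^*(j)-s)=1$.

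Perform the swap $\pi'(i):=\pi^*(j)-s$, $\pi'(j):=\pi^*(i)\cup\{s\}$, leaving $\pi'(k):=\pi^*(k)$ for $k\notin\{i,j\}$. Since $u_i(\pi'(i))=u_i(\pi^*(j)-s)=1$, agent $i$ becomes happy, while the happiness of every agent other than $i$ and $j$ is unchanged. If $u_j(\pi^*(j))=0$, or if $u_j(\pi'(j))=1$, then $h(\pi')\ge h(\pi^*)+1$, contradicting the primary optimality of $\pi^*$. Otherwise $u_j(\pi^*(j))=1$ and $u_j(\pi'(j))=0$, so $h(\pi')=h(\pi^*)$, but now $\sigma(\pi')=\sigma(\pi^*)-|\pi^*(j)|+|\pi'(i)|=\sigma(\pi^*)-1$, contradicting the minimality of $\sigma$. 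In every case we have reached a contradiction, so $\pi^*$ is \gese.

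The step I expect to be the real obstacle is identifying the correct tie-breaker. Maximizing $h$ alone is insufficient, because in the "bad" case above the swap keeps $h$ unchanged whenever $j$ was happy and $\pi^*(i)\cup\{s\}$ is not valuable to $j$; one then needs a secondary objective that strictly drops under the one-item shrinkage $|\pi'(i)|=|\pi^*(j)|-1$. The total happy-bundle size $\sigma$ works essentially by design, since the swap replaces the happy bundle $\pi^*(j)$ by the strictly smaller happy bundle $\pi^*(j)-s$. Once this potential is in hand, the remaining verification is the short swap argument above, which also naturally suggests a local-search algorithm producing such an allocation.
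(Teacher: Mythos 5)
Your proof is correct, but it follows a genuinely different route from the paper's. The paper proves Theorem~\ref{thm:boolean} constructively with a one-pass greedy (Algorithm~\ref{alg:boolean}): it fixes bundles agent by agent, always handing out an inclusionwise minimal subset of the remaining items that some remaining agent values at $1$; the diagonal utilities $u_1(\pi(1)),\dots,u_n(\pi(n))$ then form a non-increasing $0/1$ sequence, so any envy points to an earlier, minimally chosen bundle, and minimality directly gives $u_i(\pi(j)-s)=0$. You instead first reduce \gese\ in the Boolean setting to the single condition that every envied bundle $\pi(j)$ satisfy $u_i(\pi(j)-s)=0$ for all $s\in\pi(j)$ (this reduction is right: $S^+_i(\pi(j))\cup S^0_i(\pi(j))=\pi(j)$, the $S^-_i(\pi(i))$ clause is automatic, and the non-emptiness clause — which you should state explicitly — holds because $u_i(\pi(j))=1>0=u_i(\emptyset)$ forces $\pi(j)\neq\emptyset$), and then run an extremal argument on the lexicographic potential $\bigl(h(\pi),-\sigma(\pi)\bigr)$ with a three-case swap analysis; all three cases check out, including the key Case~C computation $\sigma(\pi')=\sigma(\pi^*)-1$. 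As for what each approach buys: the paper's greedy needs to produce inclusionwise minimal valuable sets, which the authors note may be hard depending on how the utilities are represented; your potential strictly increases lexicographically over at most $(n+1)(m+1)$ values, so the implied local search terminates in $O(nm)$ swaps from \emph{any} starting allocation, and each swap only requires detecting a violating triple $(i,j,s)$ via polynomially many value queries — so your argument is arguably more directly algorithmic in the value-oracle model, at the cost of being non-constructive about where the search starts and less transparent about the structure of the final allocation.
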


\begin{restatable}{thm}{thmnegboolean}
\label{thm:negboolean}
When all agents have identical negative Boolean utility functions, an \gsee\ allocation always exists.  Such an allocation can be computed in polynomial time.
\end{restatable}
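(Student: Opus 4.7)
The plan is to exploit the rigid structure of identical negative Boolean utilities: since $u$ takes values in $\{-1,0\}$ and $u(\emptyset)=0$, every bundle has utility either $0$ or $-1$, and envy in an allocation can arise only from an agent with a $-1$-bundle toward one with a $0$-bundle. My first step is to observe that in this setting the \gsee\ conditions collapse to a single clean requirement: every $-1$-bundle $\pi(i)$ belonging to an envious agent must be \emph{minimally bad}, meaning $u(\pi(i))=-1$ while $u(\pi(i)-s)=0$ for every $s\in\pi(i)$. Indeed, when $u(\pi(i))=-1$ and $u(\pi(j))=0$, the ``good side'' condition is automatic because any strict good $s\in\pi(j)$ satisfies $u(\pi(j)-s)=-1=u(\pi(i))$; on the ``chore side'', every item of a $-1$-bundle has marginal $\leq 0$, so the chore-side condition $u(\pi(i)-s)\ge u(\pi(j))=0$ is forced for every $s\in\pi(i)$.

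Given this characterization, I would build an \gsee\ allocation by a recursive peeling procedure. At each step, let $S'$ denote the remaining items and $A$ the remaining agents. If $u(S')=0$ or $|A|=1$, assign all of $S'$ to one agent $a\in A$ and empty bundles to the rest of $A$; otherwise $u(S')=-1$, so extract a minimal bad subset $B\subseteq S'$, give $B$ to one agent $a\in A$, and recurse on $(S'\setminus B,\,A\setminus\{a\})$. A minimal bad set is found greedily: initialize $B=S'$ and while there exists $s\in B$ with $u(B-s)=-1$, replace $B$ by $B-s$; at termination one has $u(B)=-1$ and $u(B-s)=0$ for every $s\in B$. Each greedy search uses $O(m^2)$ utility queries and the outer recursion has depth at most $n$, so the total running time is polynomial.

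Correctness follows from a short case analysis on how the recursion terminates. If it halts because $u(S')=0$ at some step, then the bundles already peeled off are minimally bad by construction, the currently processed agent receives a $0$-bundle, and all later agents receive the empty set (also of utility $0$); the only envy in the final allocation is from the early $-1$-agents toward the $0$-agents, and this is admissible under \gsee\ precisely because those $-1$-bundles are minimally bad. If the recursion instead halts because $|A|=1$, every agent including the last receives a $-1$-bundle, all utilities are equal, and no envy is present at all, so \gsee\ holds trivially. The heart of the argument is the reduction of the \gsee\ conditions to minimal badness; once that reduction is in hand, the peeling is essentially routine, and I expect no further obstacle.
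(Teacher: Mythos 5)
Your proof is correct, but it takes a genuinely different route from the paper's. Your reduction of the \gsee\ conditions to minimal badness is exactly right: for identical negative Boolean utilities, envy from $i$ to $j$ forces $u(\pi(i))=-1$ and $u(\pi(j))=0$; then $S^-(\pi(i))\cup S^0(\pi(i))$ is all of $\pi(i)$, the condition on $S^+(\pi(j))$ is vacuous since $u(\pi(j)-s)=-1=u(\pi(i))$ for any such $s$, and the non-emptiness requirement holds because $\pi(i)\neq\emptyset$; hence \gsee\ is equivalent to the statement that whenever some bundle has utility $0$, every $-1$-bundle is minimally bad. The paper proves the theorem instead by local search: it starts from $\pi(1)=S$, repeatedly finds a non-EFX envy from $i$ to $j$ (which, by the same reduction, yields $s\in\pi(i)$ with $u(\pi(i)-s)=-1$), moves $s$ into $\pi(j)$, and shows that the potential $\varphi(\pi)=m\cdot|\{i:u(\pi(i))=-1\}|+\sum_{i:u(\pi(i))=0}|\pi(i)|$ strictly increases and is bounded by $mn$. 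Your greedy peeling is more direct and produces the allocation in a single pass; the paper's argument is an improvement procedure whose termination needs the potential function. Notably, the paper explicitly remarks that a naive adaptation of its Boolean algorithm --- repeatedly extracting inclusionwise minimal $-1$-sets --- may fail because of the set handed to the last agent; your variant sidesteps this pitfall through your termination rule: the only agent who can receive a non-minimally-bad $-1$-bundle is the final one, and that occurs only when every other agent also holds a $-1$-bundle, so no envy exists at all (a step that, like the paper's proof, uses identicality of the utilities essentially). One small imprecision: your two terminal cases should be read as ``$u(S')=0$ (regardless of $|A|$)'' and ``$|A|=1$ and $u(S')=-1$''; when $|A|=1$ and $u(S')=0$ the last agent gets a $0$-bundle rather than a $-1$-bundle, but that situation is already covered by your first branch, so the case analysis is complete.
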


Although these two statements show a lot of similarities, their proofs will be significantly different: while the Boolean case is an easy observation, the negative Boolean case requires a somewhat tricky approach that only works for the case of identical utility functions.

\subsection{Techniques: Old and New}
\label{sec:techniques}

\paragraph{EF1 allocations.}
In the setting of goods and monotone utilities, an EF1 allocation always exists~\cite{lipton2004approximately}. The algorithm in  \cite{lipton2004approximately} assigns goods to bundles one-by-one, while maintaining the property that the partial allocation constructed so far is EF1. At a general step of the algorithm, if no unenvied agent exists, then there exists a cycle of envy among agents, and bundles are shifted along such cycles until no cycle of envy remains. Otherwise the next good is added to an unenvied agent. The key observation is that although such a step might result in new envies, those can be eliminated by removing the good she just received, so the resulting partial allocation remains EF1.

In order to find an EF1 allocation with arbitrary utilities, the Generalized Envy Graph Algorithm proposed in \cite{aziz2019fair} generalizes the above algorithm. The items are allocated to agents one-by-one, always maintaining the property that the partial allocation constructed so far is EF1. In a general step of the algorithm, the next item is added to a source vertex of the subgraph of the envy graph spanned by the set of agents who have a non-negative marginal utility for the given item. If no such agent exists, then the item is added to a sink vertex of the envy graph. After the addition of the item, directed cycles of the modified envy graph are eliminated by shifting the bundles along them. However the allocation provided by the algorithm is not necessarily EF1, even when the utility functions are identical -- we include a simple example illustrating this in Section~\ref{sec:ef1}.

In Section~\ref{sec:ef1}, we prove that an EF1 allocation always exists for $n=2$ with arbitrary utilities. More specifically, we show that for any ordering $s_1,\ldots,s_m$ of all the items in the entire set $S$, there is an EF1 allocation $\pi = \langle \pi(1),\pi(2)\rangle$ such that $\{\pi(1),\pi(2)\} = \{\{s_1,\ldots,s_t\}, \{s_{t+1},\ldots,s_m\}\}$ for some $0 \le t \le m$. This is shown by a combinatorial argument on prefixes and suffixes of the ordered set $\langle s_1,\ldots,s_m\rangle$. We show an example in Section~\ref{sec:counter-example} that an EFX allocation need not exist for 2 agents with non-monotone, non-additive utilities - even when the utility functions are identical.

\paragraph{EFX allocations.}
As mentioned earlier, Plaut and Roughgarden \cite{plaut2018almost} showed the existence of EFX$_0$ allocations for monotone (non-decreasing) utilities in the setting of goods. They construct such an allocation using the so-called leximin solution that selects the allocation which maximizes the minimum utility, and subject to this, maximizes the second minimum utility, and so on. Our proof of existence of \gsee\ allocations for monotone (non-increasing) utilities in the setting of chores is different from this and is similar to the method used in \cite{CKMS20} where another proof of the above result from \cite{plaut2018almost} was given - this method also leads to a pseudo-polynomial time algorithm to find an \gsee\ allocation. 

Our construction resembles the one in \cite{lipton2004approximately} to show the existence of EF1 allocations. Interestingly, the construction in \cite{lipton2004approximately} never {\em breaks up} any bundle obtained in a partial allocation whereas the construction in \cite{CKMS20} and ours may need to -- this is because we need to maintain an EFX allocation which is more demanding than an EF1 allocation. Thus items allocated to agents in earlier rounds may go back to the pool of unallocated items in some later round. However we show a potential function (same as the one in \cite{CKMS20}) that improves as our algorithm progresses -- thus our algorithm always converges. This result is given in Section~\ref{sec:chores-efx}.

The solutions for Boolean and negative Boolean utilities are based on different approaches. While the Boolean case is an easy observation but is non-algorithmic, the negative Boolean case is solved via a non-trivial improvement step. 
Our results for these special cases are given in Sections~\ref{sec:boolean} and \ref{sec:negboolean}, respectively. We hope that the tools used for these special cases will prove useful in other situations as well. Finally, the main results and the most important open problems are summarized in Section~\ref{sec:conc}.

\subsection{Other Related Work}

Let us further mention several results that are closely related to fair divisions of indivisible items. Aziz et al. \cite{aziz2015fair} considered an assignment problem in which agents have ordinal preferences over objects and these objects are allocated to the agents in a fair manner. They introduced several proportionality and envy-freeness concepts for discrete assignments, and gave polynomial-time algorithms to check whether a fair assignment exists for several of these fairness notions. In \cite{brams2017maximin}, Brams et al. discussed a problem setting with two agents who have strict rankings over an even number of indivisible items. They proposed algorithms to find balanced allocations of these items that maximize the minimum rank of the items that the agents receive, and are envy-free and Pareto optimal, if such allocations exist. In \cite{plaut2019communication}, Plaut and Roughgarden provided a throughout description of the communication complexity of computing a fair allocation with indivisible goods for every combination of fairness notion, utility function class, and number of players.

Another line of research considered group envy-freeness instead of pairs of agents. Berliant et al. \cite{berliant1992fair} generalized envy-freeness for equal-sized groups of agents. Conitzer et al. \cite{conitzer2019group} introduced the concept of group fairness, which implies most existing notions of individual fairness. They further proposed two relaxations similar to EF1, and showed that certain local optima of the Nash welfare function satisfy both relaxations and can be computed in pseudo-polynomial time by local search. However, \cite{conitzer2019group} assumed only goods, that is, items for which agents have positive utility. While the notion of group fairness and group envy-freeness can be extended to the case when chores are also present, the same does not hold for the relaxations. Aziz et al. \cite{aziz2018fair} proposed fairness concepts that are suitable to handle the case of goods and for chores as well. In the same spirit, Aziz and Rey \cite{aziz2019almost} defined several variants and relaxations of group fairness and group envy-freeness when both goods and chores are present. 

\section{Preliminaries}
\label{sec:prelim}

\subsection{Basic Notation}

Throughout the paper, $N$ denotes a set of $n$ \emph{agents} and $S$ denotes a set of $m$ indivisible \emph{items}. For simplicity, we will denote a subset of items by simply enumerating its elements without separating them, e.g. $124$ stands for the set $\{1,2,4\}$.

For each agent $i\in N$, a \emph{utility function} $u_i:2^S\rightarrow\mathbb{R}$ is given that represents agent $i$'s preferences over the subsets of items. We always assume that the empty set has value $0$, that is, $u_i(\emptyset)=0$ for $i\in N$. We say that the utility functions are \emph{identical} if $u_i(X)=u_j(X)$ for every $X\subseteq S$ and $i,j\in N$, and \emph{non-identical} if this condition does not necessarily hold. A utility function $u$ is called \emph{additive} if $u(X)=\sum_{s\in X} u(s)$ for every subset $X\subseteq S$, and it is \emph{monotone non-decreasing} if $u(X)\leq u(Y)$ whenever $X\subseteq Y$; similarly, it is \emph{monotone non-increasing} if $u(X)\geq u(Y)$ whenever $X\subseteq Y$. For simplicity, we will use \emph{non-additive} and \emph{non-monotone} as a shorthand for \emph{not necessarily additive} and \emph{not necessarily monotone}, respectively. 

The \emph{marginal utility} of an item $s\in S$ towards a subset $X\subseteq S$ is denoted by $u(s|X)$ and is defined as $u(s|X)=u(s+X)-u(X)$. For a set $X\subseteq S$ and agent $i$, we denote by $S^+_i(X)=\{s\in X\mid u_i(s|X-s)>0\}$, $S^-_i(X)=\{s\in X\mid u_i(s|X-s)<0\}$, and $S^0_i(X)=\{s\in X\mid u_i(s|X-s)=0\}$ the sets of items in $X$ whose deletion decreases, increases, and does not change the utility of agent $i$ on set $X$, respectively. We omit the index $i$ when the utility functions are identical.

An \emph{allocation} of $S$ is a function $\pi:N\rightarrow 2^S$ assigning to each agent a (possibly empty) subset of items that altogether give a partition of $S$, that is, $\pi(i)\cap\pi(j)=\emptyset$ for distinct $i,j\in N$ and $\bigcup_{i\in N}\pi(i)=S$. We will refer to the set $\pi(i)$ as the \emph{bundle of agent $i$}. Agent $i$ \emph{envies} agent $j$ if $u_i(\pi(i))<u_i(\pi(j))$. To any allocation $\pi$, we associate a directed graph $G_\pi=(N,E)$ called the \emph{envy graph}, where there is a directed edge from $i$ to $j$ if agent $i$ envies agent $j$ for the given allocation.

\subsection{Fairness in the Non-monotone, Non-additive Setting}

There are several ways to characterize fairness, probably the most natural one being \emph{envy-freeness}, that requires that no agent envies another agent: 
\begin{description}
\item[(\namedlabel{ef}{EF})] 
For any $i,j\in N$ inequality $u_i(\pi(i))\geq u_i(\pi(j))$ holds. 
\end{description}

Budish \cite{budish2011combinatorial} provided a relaxation of envy-freeness by introducing the concept of \emph{envy-freeness up to one good} in the context of monotone allocations. Aziz et al. \cite{aziz2019fair} extended the definition to the non-monotone case by requiring that an agent's envy can be eliminated by removing some item either from her own bundle or the envied one:
\begin{description}
\item[(\namedlabel{ef1}{EF1})]
For any $i,j\in N$ at least one of the following holds:
\begin{enumerate}[(i)]
    \item $u_i(\pi(i))\geq u_i(\pi(j))$ \label{eq:ef1i}
    \item  $u_i(\pi(i) - s)\geq u_i(\pi(j)-s)$ for some $s\in \pi(i) \cup \pi(j)$. 
\end{enumerate}
\end{description}
Note that EF1 is strictly weaker than EF.

As a less permissive relaxation of envy-freeness, Caragiannis et al. \cite{caragiannis2016unreasonable} introduced the notion of \emph{envy-freeness up to any good} in the context of goods and monotone utilities. According to their definition, an allocation is EFX if for any pair $i,j$ of agents, agent~$i$ may envy agent~$j$, however this envy would vanish upon removing any good from $j$'s bundle, i.e., $u_i(\pi(i)) \ge u_i(\pi(j) - s)$ for all $s \in S^+_i(\pi(j))$. Plaut and Roughgarden \cite{plaut2018almost} introduced a stronger variant called EFX$_0$, where the envy should vanish upon removing any good from $S^+_i(\pi(j))\cup S^0_i(\pi(j))$.
For non-monotone, additive utility functions with goods and chores Aziz et al. \cite{aziz2019fair} proposed to call an allocation EFX if for any pair $i,j$ of agents, $u_i(\pi(i)-s)\geq u_i(\pi(j)-s)$ for any $s\in S^+_i(\pi(j))\cup S^-_i(\pi(i))$. Note that the latter definition does not require anything to hold for items $s\in S$ with $0$ marginal utility value.

Extending the definition to non-monotone, non-additive utilities is not immediate. We introduce four variants, in descending order of strength.

\begin{description}
\item[(\namedlabel{gesee}{\gesee})]
For any $i,j\in N$ at least one of the following holds:
\begin{enumerate}[(i)]
    \item $u_i(\pi(i))\geq u_i(\pi(j))$ \label{eq:geseei}
    \item $u_i(\pi(i))\geq u_i(\pi(j)-s)$ for every $s\in S^+_i(\pi(j))\cup S^0_i(\pi(j))$,\\
          $u_i(\pi(i)-s)\geq u_i(\pi(j))$ for every $s\in S^-_i(\pi(i))\cup S^0_i(\pi(i))$, and\\
          $S^+_i(\pi(j))\cup S^0_i(\pi(j))\cup S^-_i(\pi(i))\cup S^0_i(\pi(i))\neq\emptyset$.
    \label{eq:geseeii}
\end{enumerate}
\end{description}

It is not difficult to see that \gesee is strictly weaker than EF, but it is strictly stronger than EF1. However, \gesee\ is too much to ask for: with such a definition, the simple example with two agents and two goods with identical additive utilities $u(1) = 1$ and $u(2) = 0$ has no \gesee\ allocation. Thus we introduce two, slightly weaker variants.

\begin{description}
\item[(\namedlabel{gese}{\gese})]
For any $i,j\in N$ at least one of the following holds:
\begin{enumerate}[(i)]
    \item $u_i(\pi(i))\geq u_i(\pi(j))$ \label{eq:gesei}
    \item $u_i(\pi(i))\geq u_i(\pi(j)-s)$ for every $s\in S^+_i(\pi(j))\cup S^0_i(\pi(j))$,\\
          $u_i(\pi(i)-s)\geq u_i(\pi(j))$ for every $s\in S^-_i(\pi(i))$, and\\
          $S^+_i(\pi(j))\cup S^0_i(\pi(j))\cup S^-_i(\pi(i))\neq\emptyset$.
    \label{eq:geseii}
\end{enumerate}
\end{description}

\begin{description}
\item[(\namedlabel{gsee}{\gsee})]
For any $i,j\in N$ at least one of the following holds:
\begin{enumerate}[(i)]
    \item $u_i(\pi(i))\geq u_i(\pi(j))$ \label{eq:gseei}
    \item $u_i(\pi(i))\geq u_i(\pi(j)-s)$ for every $s\in S^+_i(\pi(j))$,\\
          $u_i(\pi(i)-s)\geq u_i(\pi(j))$ for every $s\in S^-_i(\pi(i))\cup S^0_i(\pi(i))$, and\\
          $S^+_i(\pi(j))\cup S^-_i(\pi(i))\cup S^0_i(\pi(i))\neq\emptyset$.
    \label{eq:gseeii}
\end{enumerate}
\end{description}
\gese and \gsee are symmetric, therefore these definitions represent in a certain sense dual concepts. Nevertheless, we will see that results for one of them do not automatically carry over to the other. Note that \gesee implies both \gese and \gsee. 

Finally, let us introduce a further weaker condition that is the easiest to work with.

\begin{description}
\item[(\namedlabel{gse}{\gse})]
For any $i,j\in N$ at least one of the following holds:
\begin{enumerate}[(i)]
    \item $u_i(\pi(i))\geq u_i(\pi(j))$ \label{eq:gsei}
    \item $u_i(\pi(i))\geq u_i(\pi(j)-s)$ for every $s\in S^+_i(\pi(j))$,\\
          $u_i(\pi(i)-s)\geq u_i(\pi(j))$ for every $s\in S^-_i(\pi(i))$, and\\
          $S^+_i(\pi(j))\cup S^-_i(\pi(i))\neq\emptyset$.
    \label{eq:gseii}
\end{enumerate}
\end{description}
Clearly, \gse is implied by both \gese and \gsee. For additive, monotone utility functions, EFX introduced in \cite{caragiannis2016unreasonable} and EFX$_0$ introduced  in \cite{plaut2018almost} are identical to \gse and \gese, respectively. We further note that if $s\in S_i^0(X)$ then $u_i(X-s)=u_i(X)$. Therefore agent $i$ cannot envy agent $j$ in an \gesee\ or \gese\ allocation if $S^0_i(\pi(j))\neq\emptyset$, and similarly, agent $i$ cannot envy agent $j$ in an \gesee\ or \gsee\ allocation if $S^0_i(\pi(i))\neq\emptyset$. In this sense, parts $(i)$ and $(ii)$ of definitions \ref{gesee}, \ref{gese}, and \ref{gsee} describe non-disjoint situations. 

In any of the above cases, if $u_i(\pi(i))<u_i(\pi(j))$ for some allocation $\pi$, then we refer to this envy as an \emph{EFX envy} if $i$ and $j$ satisfy the appropriate version of the second condition in  \gesee -- \gse, otherwise it is called a \emph{non-EFX envy}.

\subsection{Relation to EFX and \texorpdfstring{EFX$_0$}{EFX0}}

Although the notion of \gesee\ allocations seems to be reasonable, such an allocation does not necessarily exist even in very simple examples. \gese\ allocations provide a natural extension of the EFX$_0$ property, and \gsee\ serves as a symmetric counterpart. Finally, \gse\ provides a generalization of EFX allocations. 

Let us explain why the non-emptiness condition in the $(ii)$ part of the definitions is necessary. Consider an example with two agents and two items with identical utilities, namely $u(\emptyset)=0$, $u(1)=u(2)=2$ and $u(12)=1$. Then for the allocation $\pi(1)=\emptyset$ and $\pi(2)=12$, agent 1 envies agent 2. Then all of the sets $S^-_1(\pi(1))$, $S^0_1(\pi(1))$, $S^0_1(\pi(2))$, and $S^+_1(\pi(2))$ are empty, thus condition $(ii)$ without the non-emptiness assumption is a tautology in all cases. Nevertheless, the allocation does not seem to be fair towards agent $1$ in any way.

The peculiar conditions for the non-emptiness of the corresponding sets have not arisen in previous works for two reasons. On the one hand, earlier results mainly focused on monotone (non decreasing) utilities. But most importantly, they mostly considered additive utilities, and in such cases the non-emptiness conditions are redundant. Indeed, if $u_i(\pi(i))<u_i(\pi(j))$ and $u_i$ is additive, then at least one of the sets $S^-_i(\pi(i))$ and $S^+_i(\pi(j))$ is non-empty.

\section{Non-monotone and Non-additive Utilities}
\label{sec:nmna}

In this section we consider arbitrary utilities. Somewhat surprisingly, an EF1 allocation always exists for two agents and can be found in polynomial time even in this general setting; our algorithm is presented in Section~\ref{sec:ef1}. The existence of an EFX allocation with arbitrary but identical utilities is discussed in Section~\ref{sec:counter-example}.

\subsection{\textbf{EF1} allocations}
\label{sec:ef1}

We now show a simple example that the allocation provided by the Generalized Envy Graph Algorithm proposed in \cite{aziz2019fair} (see Section~\ref{sec:techniques} for an outline of this algorithm) is not necessarily EF1, even if the utility functions are identical. Consider an instance with two agents and three items with utilities defined as follows: $u(\emptyset)=0$, $u(1)=1$, $u(2)=-1$, $u(3)=-1$, $u(12)=u(13)=u(23)=1$, and $u(123)=1$. It is not difficult to check that if the items arrive in order $1,2,3$, then this algorithm gives $\pi(1)=123$ and $\pi(2)=\emptyset$ as a solution. This solution is not EF1 as the deletion of any of the items from the bundle of the first agent results in a bundle with strictly positive utility value. Hence Theorem 2 in \cite{aziz2019fair} is not correct in its present form. 

That is, the existence of an EF1 allocation for non-monotone, non-additive utility functions is still open even for identical utilities. Our first result shows that such an allocation exists for two agents in the most general setting. 

\thmefo*
\begin{proof}
For an arbitrary ordering $1,\dots,m$ of the items, let $F_i=\{1\dots i\}$ and $L_i=\{i+1\dots m\}$ denote the sets of the first $i$ and last $m-i$ items, respectively, for $i=0,\dots,m$, where $F_0=L_m=\emptyset$. 

Assume that $u_i(F_j) = u_i(L_j)$ for some $i\in\{1,2\}$ and $j\in\{0,\dots,m\}$. If $u_{3-i}(F_j) \leq u_{3-i}(L_j)$, then let $\pi(i)=F_j$ and $\pi(3-i)=L_j$, else let $\pi(i)=L_j$ and $\pi(3-i)=F_j$. By the assumption $u_i(F_j) = u_i(L_j)$, $\pi$ is an envy-free allocation. Similarly, if there exists an index $j\in\{0,\dots,m\}$ such that $u_i(F_j)\geq u_i(L_j)$ and $u_{3-i}(F_j)\leq u_{3-i}(L_j)$ for some $i\in\{1,2\}$, then $\pi(i)=F_j$, $\pi(3-i)=L_j$ is an envy-free allocation again. 

From now on we assume that neither of the above two cases holds. Thus $u_1(S) \ne 0$. We distinguish two cases.

\medskip

\noindent \textbf{Case 1.} $u_1(S)>0$ where $S$ is the entire set of $m$ items.

As $u_1(F_0)=u_1(L_m)=u_1(\emptyset)=0$ and $u_1(F_m)=u_1(L_0)=u_1(S)>0$, there exists an index $j \in \{0,\ldots,m-1\}$ such that $u_1(F_j)<u_1(L_j)$ and $u_1(F_{j+1})>u_1(L_{j+1})$. By our assumption, $u_2(F_j)<u_2(L_j)$ and $u_2(F_{j+1})>u_2(L_{j+1})$ also hold. 

\smallskip

\noindent \textbf{Subcase 1.1} $u_1(F_j)\leq u_1(L_{j+1})$: set $\pi(1)=L_{j+1}$ and $\pi(2)=F_{j+1}$.

\smallskip

\noindent \textbf{Subcase 1.2} $u_1(F_j)>u_1(L_{j+1})$: set $\pi(1)=F_j$ and $\pi(2)=L_j$

\smallskip

In both subcase~1.1 and subcase~1.2, agent $1$ envies agent~$2$'s bundle, but this envy can be eliminated by deleting $j+1$ from agent~$2$'s bundle. This is because $u_1(L_{j+1}) \geq u_1(F_j)$ in subcase~1.1 and $u_1(F_j) > u_1(L_{j+1})$ in subcase~1.2.

\medskip

\noindent \textbf{Case 2.} $u_1(S)<0$ where $S$ is the entire set of $m$ items.

As $u_1(F_0)=u_1(L_m)=u_1(\emptyset)=0$ and $u_1(L_0)=u_1(F_m)=u_1(S)<0$, there exists an index $j \in \{0,\ldots,m-1\}$ such that $u_1(F_j)>u_1(L_j)$ and $u_1(F_{j+1})<u_1(L_{j+1})$. By our assumption, $u_2(F_j)>u_2(L_j)$ and $u_2(F_{j+1})<u_2(L_{j+1})$ also hold. 

\smallskip

\noindent \textbf{Subcase 2.1} $u_1(F_j)\geq u_1(L_{j+1})$: set $\pi(1)=F_{j+1}$ and $\pi(2)=L_{j+1}$.

\smallskip

\noindent \textbf{Subcase 2.2} $u_1(F_j)<u_1(L_{j+1})$: set $\pi(1)=L_j$ and $\pi(2)=F_j$.

\smallskip

In both subcase~2.1 and subcase~2.2, agent $1$ envies agent~$2$'s bundle, but this envy can be eliminated by deleting $j+1$ from agent~$1$'s bundle. This is because $u_1(F_j) \ge u_1(L_{j+1})$ in subcase~2.1 and $u_1(L_{j+1}) > u_1(F_j)$ in subcase~2.2.

\medskip

This concludes the proof of the theorem.
\end{proof}

\subsection{An Interesting Example}
\label{sec:counter-example}

The difference between EF1 and EFX allocations is well illustrated by the fact that, in contrast to the EF1 case, not even the existence of an \gse\ allocation is ensured.

\thmnmnai*
\begin{proof}
Let $S=\{1,2,3\}$ be a set of three items and let $u:2^S\rightarrow\mathbb{R}$ be defined by $u(\emptyset)=0$, $u(1)=1$, $u(2)=2$, $u(3)=0$, $u(12)=3$, $u(13)=0$, $u(23)=3$, and $u(123)=4$ (see Figure~\ref{fig:nm_na_i}).  We claim that there is no \gse\ allocation for two agents.

\begin{figure}[t]
    \centering
    \includegraphics[width=0.3\textwidth]{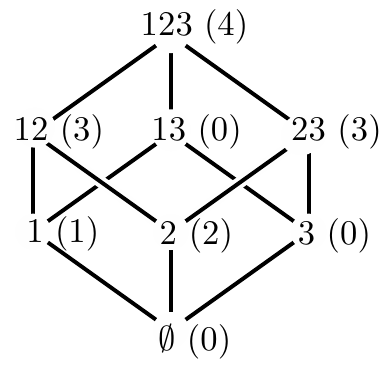}
    \caption{Illustration of Theorem~\ref{thm:nm_na_i}. The lattice of subsets of $S=\{1,2,3\}$, together with the corresponding utility value (in parenthesis).}
    \label{fig:nm_na_i}
\end{figure}

As the utility functions are identical, we may assume that the first agent receives more items. 
If all the items are allocated to agent $1$, then $u_1(\pi(1))=u(123)=4$ and $u_2(\pi(2))=u(\emptyset)=0$, hence agent $2$ envies agent $1$. However, this is a non-EFX envy as $1,3\in S^+(123)$, but deleting either item $1$ or $3$ from $123$ results in a set with positive utility value.

If $\pi(1)=12$ and $\pi(2)=3$, then $u_1(\pi(1))=u(12)=3$ and $u_2(\pi(2))=u(3)=0$, hence agent $2$ envies agent $1$. This is a non-EFX envy as $1\in S^+(12)$, but $2=u(2)>u(3)=0$. If $\pi(1)=23$ and $\pi(2)=1$, then $u_1(\pi(1))=u(23)=3$ and $u_2(\pi(2))=u(1)=1$, hence agent $2$ envies agent $1$. This is a non-EFX envy as $3\in S^+(23)$, but $2=u(2)>u(1)=1$. Finally, if $\pi(1)=13$ and $\pi(2)=2$, then $u_1(\pi(1))=u(13)=0$ and $u_2(\pi(2))=u(2)=2$, hence agent $1$ envies agent $2$. This is a non-EFX envy as $3\in S^-(13)$, but $1=u(1)<u(2)=2$. 
\end{proof}

\begin{rem}
Although the utility function defined in the proof of Theorem~\ref{thm:nm_na_i} is non-monotone, it is as close to being monotone as possible in the sense that it can be made monotone by increasing its value on a single set (namely $13$) by one. Note that an EF1 allocation exists: just set $\pi(1)=13$ and $\pi(2)=2$, then the envy of agent $1$ can be eliminated by deleting $2$ from $\pi(2)$.
\end{rem}

\section{\texorpdfstring{\gsee}{EFX+0} Allocations of Chores with Identical Utilities}
\label{sec:chores-efx}

Envy-freeness and its relaxations have been extensively studied in the setting of indivisible goods. A closely related problem is the fair division of {\em chores}. Here we assume every utility function $u_i$ is monotone non-increasing, i.e., $X \subseteq Y$ implies $u_i(X) \ge u_i(Y)$.

Our goal is to find an \gsee\ allocation $\pi$ of $S$. Note that for identical monotone non-increasing utilities this requires that 
\begin{description}
\item[(\namedlabel{star}{$\star$})]~ for any pair of agents $i,j$, $u_i(\pi(i) - s) \ge u_i(\pi(j))$ for all $s \in \pi(i)$.
\end{description}
We now show that such an allocation of chores always exists when (i)~$n = 2$ or (ii)~all $n$ agents have identical utilities. First we show the following result.

\thmchores*
\begin{proof}
Our algorithm is presented as Algorithm~\ref{alg:chores}. The joint utility function of the agents is denoted by $u$. The algorithm runs in rounds, and we maintain a pool $P$ of unallocated chores and an allocation that is \gsee\ on the chores currently assigned to the agents. Initially, we set $P = S$. Interestingly, chores once allocated to some agent may go back to the pool $P$ in a later round. 

Consider any round and let $\pi = \langle \pi(1),\ldots,\pi(n)\rangle$ be the \gsee\ allocation at the start of this round and let $\pi' = \langle \pi'(1),\ldots,\pi'(n)\rangle$ be the \gsee\ allocation at the end of this round. Let $P$ (resp., $P'$) be the set of unallocated chores at the start (resp., end) of this round. We will ensure that at least one of the following two conditions is satisfied: 
\begin{itemize}
\item[(i)] $|P'| < |P|$ and $\sum_{i\in N} u(\pi'(i)) \le \sum_{i\in N} u(\pi(i)))$, i.e., the number of allocated chores increases in this round and {\em utilitarian welfare} does not increase,
\item[(ii)] $\sum_{i\in N} u(\pi'(i)) < \sum_{i\in N} u(\pi(i)))$, i.e., utilitarian welfare strictly decreases in this round.
\end{itemize}
In other words, in each round of our algorithm, we either decrease utilitarian welfare or we increase the number of allocated chores without increasing utilitarian welfare. We now describe a single round in detail. 

Since all agents have the same utilities, the agents can be ordered in terms of the utilities of their bundles. So at any point in time, there is at least one agent who can be called a {\em happiest} agent: one who does not envy any other agent. In each round, we pick any unallocated chore $s$ and add $s$ to the bundle of a happiest agent (call this agent $k$). There are two cases:
\medskip

\noindent\textbf{Case 1.} 
Suppose $\langle \pi(1),\ldots,\pi(k)+s,\ldots,\pi(n)\rangle$ is \gsee. Then set $\pi'(k) = \pi(k) + s$ and $\pi'(i) = \pi(i)$ for $i \ne k$. Observe that the allocation $\pi' = \langle \pi'(1),\ldots,\pi'(n)\rangle$ satisfies condition~(i) given above.

\medskip

\noindent\textbf{Case 2.}
Suppose $\langle \pi(1),\ldots,\pi(k)+s,\ldots,\pi(n)\rangle$ is not \gsee. By observation \eqref{star}, $u(\pi(k) + s - s') < u(\pi({\ell}))$ for some $s' \in \pi(k)$ and $\ell \in N$. Thus we can find an inclusionwise minimal subset $\pi'(k) \subset \pi(k) + s$ such that $\pi'(k)$ is an envious bundle, i.e., $u(\pi'(k)) < u(\pi(\ell))$ for some $\ell \in N$. Since $\pi'(k)$ is a minimal envious subset of $\pi(k) + s$, for every $X \subset \pi'(k)$, we have $u(X) \ge u(\pi(i))$ for all $i \in N$. 

Now the chores in $(\pi(k) + s)\setminus \pi'(k)$ are thrown back into the pool $P$, i.e., 
$P' = P \cup (\pi(k) + s)\setminus \pi'(k)$. Let $\pi'(i) = \pi(i)$ for all $i \ne k$.

\begin{cl}
The allocation $\pi' = \langle \pi'(1),\ldots,\pi'(n)\rangle$ is an \gsee\ allocation. 
\end{cl}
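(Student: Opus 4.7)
The plan is to verify condition~\eqref{star} for every ordered pair of agents in the new allocation $\pi'$, splitting into cases according to which (if either) agent is $k$, since only $k$'s bundle has changed.

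First, for a pair $(i,j)$ with $i,j \ne k$, the allocation $\pi'$ coincides with $\pi$ on both bundles, so the \gsee\ condition is inherited directly from the previous round.

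Next, for pairs $(k,j)$ where $k$ is potentially envious of $j$: by construction $\pi'(k)$ is an inclusion-minimal envious subset of $\pi(k)+s$, so every proper subset $\pi'(k) - s'$ (with $s' \in \pi'(k)$) is non-envious. This gives $u(\pi'(k) - s') \ge u(\pi(j')) = u(\pi'(j'))$ for every $j' \in N$ and every $s' \in \pi'(k)$, which is exactly condition~\eqref{star} for $(k,j')$. The non-emptiness of $\pi'(k)$ required by the definition of \gsee\ is automatic, because $u(\emptyset)=0 \ge u(\pi(j'))$ for every $j'$ (in the chore setting every bundle has non-positive utility), so $\emptyset$ cannot be envious.

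The main obstacle is the remaining case: a pair $(j,k)$ with $j \ne k$ possibly envious of $k$, where we must rule out new non-EFX envies towards $k$ introduced by shrinking $k$'s bundle. The key observation is that $u(\pi'(k)) < u(\pi(k))$. Indeed, since $\pi'(k)$ is envious we have $u(\pi'(k)) < u(\pi(\ell))$ for some $\ell$, and since $k$ was happiest in $\pi$ we have $u(\pi(\ell)) \le u(\pi(k))$. Consequently, if $j$ envies $k$ in $\pi'$, then $u(\pi(j)) < u(\pi'(k)) < u(\pi(k))$, so $j$ already envied $k$ in $\pi$; note also $\pi(j) \ne \emptyset$, since otherwise $u(\pi(j)) = 0 \ge u(\pi'(k))$, contradicting the envy. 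Applying the \gsee\ guarantee of $\pi$ to the pair $(j,k)$ then yields $u(\pi(j) - s'') \ge u(\pi(k)) > u(\pi'(k))$ for every $s'' \in \pi(j)$, which is precisely condition~\eqref{star} for $(j,k)$ in $\pi'$. Combining the three cases finishes the argument.
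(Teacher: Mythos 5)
Your proof is correct and follows essentially the same route as the paper's: the same three-way case split on whether the envious or envied agent is $k$, the same use of inclusionwise minimality of the envious subset for the pair $(k,j)$, and the same key inequality $u(\pi'(k)) < u(\pi(k))$ (via $k$ being a happiest agent) combined with the \gsee\ property of $\pi$ for the pair $(j,k)$. The small additions you make (non-emptiness of $\pi'(k)$ and the observation that an envy toward $k$ in $\pi'$ was already present in $\pi$) are harmless elaborations of what the paper leaves implicit.
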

\begin{proof}
We need to show that $u(\pi'(i) - r) \ge u(\pi'(j))$ for all $i, j \in N$ and $r \in \pi'(i)$. Since $\pi'(t) = \pi(t)$ for all $t \ne k$ and $\pi$ is \gsee, for all $i,j$ such that neither $i$ nor $j$ is $k$, this holds.

{\em When $i = k$:} Since $\pi'(k)$ is a minimal envious subset of $\pi(k) + s$, we have $u(\pi'(k) - r) \ge u(\pi(j)) = u(\pi'(j))$ for all $r \in \pi'(k)$ and $j \ne k$.

{\em When $j = k$:} We claim that $u(\pi'(k)) < u(\pi(k))$. Indeed, $u(\pi'(k)) < u(\pi(\ell))$ for some $\ell\in[n]$ since $\pi'(k)$ is envious, and $u(\pi(\ell)) \le u(\pi(k))$ for all $\ell\in[n]$ since $k$ is a happiest agent in $\pi$. Thus for any $i \ne k$, we have $u(\pi'(i) - r) = u(\pi(i) - r) \ge u(\pi(k)) > u(\pi'(k))$ for all $r \in \pi'(i)$, where the inequality $u(\pi(i) - r) \ge u(\pi(k))$ follows from $\pi$ being \gsee.  
\end{proof}

Since $u(\pi'(k)) < u(\pi(k))$ and $\pi'(t) = \pi(t)$ for all $t \ne k$, $\sum_{i\in[n]} u(\pi'(i)) < \sum_{i\in[n]} u(\pi(i)))$. Thus the allocation $\pi' = \langle \pi'(1),\ldots,\pi'(n)\rangle$ satisfies condition~(ii) given above.

\medskip

Note that there can be at most $|S|$ consecutive rounds where condition~(i) holds. After that either all chores are allocated or there is a round where condition~(ii) holds, i.e., utilitarian welfare strictly decreases. Thus the number of rounds is at most $|S|/{\Delta}$ where $\Delta = \min \{|u(X) - u(Y)|: X,Y \subseteq S \text{ and } u(X) \ne u(Y)\}$ is the minimum difference between distinct utilities. Therefore our algorithm always terminates, concluding the proof of the theorem.
\end{proof}

\begin{algorithm}[t!]
  \caption{Finding an \gsee\ allocation of chores for monotone, identical utility functions.}\label{alg:chores}
  \begin{algorithmic}[1]
    \Statex \textbf{Input:} Set $N$ of agents, set $S$ of items, monoton utility function $u:2^S\rightarrow\mathbb{R}_-$.
    \Statex \textbf{Output:} An \gsee\ allocation. 
    \State Let $P:=S$.
    \State Set $\pi(i):=\emptyset$ for $i\in N$.
    \While{$P$ is not empty} \label{st:while_2}
        \State Pick an unallocated chore $s\in P$.
        \State Pick a happiest agent $k$.
        \If{$(\pi(1),\dots,\pi(k)+s,\dots,\pi(n))$ is \gsee}
            \State $\pi(k)\leftarrow \pi(k)+s$
            \State $P\leftarrow P-s$
        \Else
            \State \begin{varwidth}[t]{0.9\linewidth} Let $X\subset\pi(k)+s$ be inclusionwise minimal s.t. $u(X)<u(\pi(l))$ for some $l\in N$. \end{varwidth}
            \State $\pi(k)\leftarrow X$
            \State $P\leftarrow P\cup(\pi(k)+s)\setminus X$
        \EndIf
    \EndWhile
    \State \textbf{return} $\pi$.
  \end{algorithmic}
\end{algorithm}

The cut-and-choose protocol of Plaut and Roughgarden~\cite{plaut2018almost} implies the following corollary.

\corchorestwoagents*
\begin{proof}
Let $u_1$ and $u_2$ be the utility functions of agent~$1$ and agent~$2$, respetively. By Theorem~\ref{thm:chores-identical-val}, there exists an \gsee\ allocation when the utility function of both agents is $u_1$. Take such an allocation; this defines a partition of $S$ into two parts $S_1$ and $S_2$. Now let the second agent choose among $S_1$ and $S_2$ based on her preferences. Clearly, agent~$2$ will have no envy as she chooses the set with better utility with respect to $u_2$. On the other hand, even if agent~$1$ envies agent~$2$, this is an \gsee\ envy due to the construction of the allocation. 
\end{proof}

\section{Boolean and Negative Boolean Utility Functions}
\label{sec:two-utility-functions}

In this section we consider special cases of non-identical utility functions: Boolean and negative Boolean utilities. 

\subsection{Boolean Utilities}
\label{sec:boolean}

Recall that a utility function $u$ is called Boolean if $u(X)\in\{0,1\}$ for $X\subseteq S$. 

\thmboolean*
\begin{proof}
 We construct an \gese\ solution by assigning bundles to agents one-by-one using Algorithm~\ref{alg:boolean}. After each step, we will refer to unassigned items and to agents without bundles as \emph{remaining} items and agents, respectively.

At a general step of the algorithm, we take an inclusionwise minimal subset $X$ of the remaining items that has utility value $1$ for at least one of the remaining agents, and we assign $X$ to one of the agents $i$ with $u_i(X)=1$. If no such set exists, then we pick an arbitrary agent $i$ and set $X$ to be the empty set. Then we delete the members of $X$ from $S$ and $i$ from $N$. When only one agent remains, we assign the remaining set of items to her.

We claim that the solution thus obtained is an \gese\ allocation. We may assume that the order of the agents in which their bundles get fixed is $1,\dots,n$. Note that the sequence $u_1(\pi(1)),\dots,u_n(\pi(n))$ is monotone decreasing, that is, it consists of a sequence of $1$'s followed by a sequence of $0$'s (any of these two parts can be empty). Therefore if agent $i$ envies $j$, then $j<i$, $u_i(\pi(j))=1$, and $u_i(\pi(i))=0$. However, by the choice of $\pi(j)$ in Step~\ref{st:min} of the algorithm, $u_i(\pi(j)-s)=0$ for every $s\in\pi(j)$. Moreover, $S^+_i(\pi(j))\neq\emptyset$ and the statement follows. 
\end{proof}

\begin{algorithm}[t!]
  \caption{Finding an \gese\ allocation for Boolean utility functions.}\label{alg:boolean}
  \begin{algorithmic}[1]
    \Statex \textbf{Input:} Set $N$ of agents, set $S$ of items, utility functions $u_i:2^S\rightarrow\{0,+1\}$.
    \Statex \textbf{Output:} An \gese\ allocation. 
    \While{$|N|\geq 2$}
        \If{there exist $i\in N$ and $X\subseteq S$ s.t. $u_i(X)=1$}
            \State  \begin{varwidth}[t]{0.9\linewidth} Let $X\subseteq S$ be inclusionwise minimal s.t. $u_i(X)=1$ for some $i\in N$. Let $i$ be an agent with $u_i(X)=1$. Set $\pi(i):=X$. \label{st:min} \end{varwidth}
            \State $N\leftarrow N-i$
            \State $S\leftarrow S-X$
        \Else
            \State Let $i\in N$ arbitrary. Set $\pi(i):=\emptyset$.
            \State $N\leftarrow N-i$
        \EndIf
    \EndWhile
    \State For the only remaining agent $j$ in $N$, set $\pi(j):=S$.
    \State \textbf{return} $\pi$
  \end{algorithmic}
\end{algorithm}

Note that an \gsee\ allocation does not necessarily exist for Boolean allocations. To see this, consider the instance with three items and two agents with identical utilities defined as follows: $u(X)=0$ if $|X|\leq 1$ and $u(X)=1$ otherwise. It is not difficult to verify that no allocation satisfies the conditions of \ref{gsee}.

\begin{rem}
The proof of Theorem~\ref{thm:boolean} is non-algorithmic, as Step~\ref{st:min} of Algorithm~\ref{alg:boolean} asks for an inclusionwise minimal set with utility value $1$ for at least one of the remaining agents. Of course the complexity of finding such a set depends on how the utility functions are given, but the difficulty is well illustrated by the fact that the problem is equivalent to determining a satisfying assignment of a Boolean function using a minimum number of true variables. 

One of the most common representations of Boolean functions are \emph{conjunctive normal forms} (CNFs), the conjunctions of clauses which are elementary disjunctions of literals. A CNF is called \emph{pure Horn} if every clause in it contains exactly one positive literal, and a Boolean function is pure Horn if it admits a pure Horn CNF representation. Pure Horn functions form a fundamental subclass of Boolean functions admitting interesting structural and computational properties. Among others, SAT is solvable for this class in linear time \cite{DOWLING1984267}. 

If each utility function $u_i$ is represented by a pure Horn CNF, then a set $X$ satisfying the conditions of Step~\ref{st:min} can be determined in polynomial time with the help of the the so-called \emph{forward chaining procedure}. For further details, we refer the interested reader to \cite{hammer1993optimal}.
\end{rem}

\subsection{Negative Boolean Utilities}
\label{sec:negboolean}

Recall that a utility function $u$ is called negative Boolean if $-u$ is Boolean.

\thmnegboolean*
\begin{proof}
The algorithm is presented as Algorithm~\ref{alg:negboolean}. The joint utility function of the agents is denoted by $u$. We start with an arbitrary allocation, say, $\pi(1)=S$ and $\pi(i)=\emptyset$ otherwise. In a general step of the algorithm, we pick a pair $i,j\in N$ of agents such that there is a non-EFX envy from $i$ to $j$. As we are considering \gsee\ allocations and the utility function is negative Boolean, this means that there exists an item $s\in\pi(i)$ such that $-1=u(\pi(i))=u(\pi(i)-s)<u(\pi(j))=0$. The algorithm moves such an item from the bundle of agent $i$ to that of agent $j$.

It suffices to show that the algorithm terminates after a polynomial number of steps. Recall that the numbers of agents and items are denoted by $n$ and $m$, respectively. Define 
\begin{equation*}
    \varphi(\pi)=m\cdot\sum_{\substack{i\in N\\u(\pi(i))=-1}}1+\sum_{\substack{i\in N\\u(\pi(i))=0}}|\pi(i)|.
\end{equation*} 
In other words, $\varphi$ counts $m$ times the number of agents with utility value $-1$, plus the total number of items in bundles having utility value $0$.

\begin{cl} \label{cl:strictly}
$\varphi(\pi))$ strictly increases throughout the algorithm.
\end{cl}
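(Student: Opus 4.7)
The plan is to track what happens to the potential $\varphi$ in a single iteration of the algorithm. At the moment of the swap we have $i,j\in N$ and $s\in\pi(i)$ with $u(\pi(i))=u(\pi(i)-s)=-1$ and $u(\pi(j))=0$, and the algorithm replaces $\pi(i)$ by $\pi(i)-s$ and $\pi(j)$ by $\pi(j)+s$. Since the utility function is negative Boolean, $u(\pi(j)+s)\in\{0,-1\}$, and these are the only two cases to consider. All other bundles are untouched, so only the contributions of $\pi(i)$ and $\pi(j)$ to $\varphi$ can change.

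First I would handle the case $u(\pi(j)+s)=0$. Here both agents stay in the same utility class ($i$ in class $-1$, $j$ in class $0$), so the first sum in $\varphi$ does not change. For the second sum, the item $s$ was previously in $\pi(i)$ (a utility-$(-1)$ bundle, hence not counted) and is now in $\pi(j)+s$ (a utility-$0$ bundle, hence counted); every other item stays in a bundle of the same class. Thus $\varphi$ increases by exactly $1$.

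Next I would handle the case $u(\pi(j)+s)=-1$. Now agent $j$ moves from class $0$ into class $-1$ while agent $i$ remains in class $-1$, so the first sum in $\varphi$ increases by exactly $m$. The items in the new bundle $\pi(j)+s$ are all in a utility-$(-1)$ bundle, so they contribute nothing to the second sum, whereas previously the items of $\pi(j)$ contributed $|\pi(j)|$ (and $s$ contributed $0$ as a member of $\pi(i)$). Hence the second sum decreases by exactly $|\pi(j)|$. The key observation is that $s\in\pi(i)$, so $s\notin\pi(j)$, and since bundles are disjoint this forces $|\pi(j)|\le m-1$. Therefore the net change in $\varphi$ is $m-|\pi(j)|\ge 1$.

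The only potential obstacle is the second case, where one must be careful that the gain of $m$ from the first sum really dominates the loss from the second sum; the bound $|\pi(j)|\le m-1$, which holds precisely because $s$ lies in a different bundle, is exactly what makes the argument tight. Both cases yield a strict increase, proving the claim. (For the overall termination of the algorithm, which will follow immediately after the claim, one observes that $\varphi$ is integer-valued and bounded above by $mn+m$, so the algorithm halts in a polynomial number of iterations.)
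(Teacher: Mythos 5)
Your proof is correct and follows essentially the same case analysis as the paper's: both split on whether $u(\pi(j)+s)$ is $0$ or $-1$, observe the exact gain of $1$ in the first case, and in the second case bound the loss in the second sum by $|\pi(j)|\le m-1$ against the gain of $m$ in the first sum (the paper derives $|\pi(j)|\le m-1$ from $\pi(i)\neq\emptyset$, you from $s\in\pi(i)\setminus\pi(j)$ --- the same observation). Nothing further is needed.
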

\begin{proof}
Let $\pi'$ denote the allocation obtained from $\pi$ by moving $s$ from $\pi(i)$ to $\pi(j)$, that is, $\pi'(i)=\pi(i)-s$, $\pi'(j)=\pi(j)+s$, $\pi'(k)=\pi(k)$ otherwise. By $u(\pi(i))=-1$, we have $\pi(i)\neq\emptyset$ and so $|\pi(j)|\leq m-1$. 

If $u(\pi'(j))=-1$, then 
\begin{align*}
\varphi(\pi')
{}&{}=m\cdot\sum_{\substack{i\in N\\u(\pi'(i))=-1}}1+\sum_{\substack{i\in N\\u(\pi'(i))=0}}|\pi'(i)|\\
{}&{} \geq 
m\cdot\left(\left(\sum_{\substack{i\in N\\u(\pi(i))=-1}}1\right)+1\right)+\left(\sum_{\substack{i\in N\\u(\pi(i))=0}}|\pi(i)|-(m-1)\right)\\
{}&{}=
m\cdot\sum_{\substack{i\in N\\u(\pi(i))=-1}}1+\sum_{\substack{i\in N\\u(\pi(i))=0}}|\pi(i)|+1\\
{}&{}=
\varphi(\pi)+1,
\end{align*}
while if $u(\pi'(j))=0$, then 
\begin{align*}
\varphi(\pi')
{}&{}=m\cdot\sum_{\substack{i\in N\\u(\pi'(i))=-1}}1+\sum_{\substack{i\in N\\u(\pi'(i))=0}}|\pi'(i)|\\
{}&{}=
m\cdot\sum_{\substack{i\in N\\u(\pi(i))=-1}}1+\sum_{\substack{i\in N\\u(\pi(i))=0}}|\pi(i)|+1\\
{}&{}=
\varphi(\pi)+1.
\end{align*}
This concludes the proof of the claim.
\end{proof}

As $\varphi$ takes integer values upper bounded by $m\cdot n$, Claim~\ref{cl:strictly} implies that the algorithm terminates after a polynomial number of steps, and the theorem follows.
\end{proof}

\begin{algorithm}[t!]
  \caption{Finding an \gsee\ allocation for identical negative Boolean utility functions.}\label{alg:negboolean}
  \begin{algorithmic}[1]
    \Statex \textbf{Input:} Set $N$ of agents, set $S$ of items, utility function $u:2^S\rightarrow\{0,-1\}$.
    \Statex \textbf{Output:} An \gsee\ allocation. 
    \State Set $\pi(1):=S$ and $\pi(i)=\emptyset$ otherwise.
    \While{$\pi$ is not \gsee}
        \State Let $i,j\in N$ s.t. $u(\pi(i))<u(\pi(j))$ and the envy is non-EFX.
        \State Choose $s\in\pi(i)$ such that $u(\pi(i)-s)<u(\pi(j))$.
        \State $\pi(i)\leftarrow\pi(i)-s$
        \State $\pi(j)\leftarrow\pi(j)+s$
    \EndWhile
    \State \textbf{return} $\pi$
  \end{algorithmic}
\end{algorithm}

Note that an \gese\ allocation does not necessarily exist for negative Boolean allocations. To see this, consider the instance with three items and two agents with identical utilities defined as follows: $u(X)=0$ if $|X|\leq 1$ and $u(X)=-1$ otherwise. It is not difficult to verify that no allocation satisfies the conditions of \ref{gese}.

\begin{rem}
A natural adaptation of Algorithm~\ref{alg:boolean} to the negative Boolean setting would be to always choose an inclusionwise minimal subset of the remaining items in Step \ref{st:min} that has $-1$ utility value. However, such an approach has no control over the set that is allocated to the last agent, and so a non-EFX envy might be present.
\end{rem}

\section{Conclusions}
\label{sec:conc}

The present paper focused on the concept of envy-freeness and its relaxations, envy-freeness up to one item and envy-freeness up to any item. Concerning EF1 allocations, we presented a polynomial-time algorithm for finding one for two agents with arbitrary utility functions. We extended the notion of EFX allocations to non-monotone, non-additive utilities, and settled the existence or non-existence of such solutions in various settings. We showed that an \gsee\ allocation of chores always exists for monotone identical utility functions.
For the classes of Boolean and identical negative Boolean utilities, we verified the existence of \gese\ and \gsee\ allocations, respectively.

Tables~\ref{tab:ef1res} and \ref{tab:efxres} show that the existence or non-existence of a fair solution is still open in many cases. Among them, we would like to draw attention to two open problems that seem to be particularly interesting.

\begin{qu}
Does there always exist an EF1 allocation for non-monotone, non-additive, identical utility functions?
\end{qu}

\begin{qu}
Does there always exist an \gse\ allocation for monotone, additive, non-identical utility functions?
\end{qu}

\section*{Acknowledgements}
 
Krist\'of B\'erczi was supported by the J\'anos Bolyai Research Fellowship of the Hungarian Academy of Sciences and by the ÚNKP-19-4 New National Excellence Program of the Ministry for Innovation and Technology. 
Naoyuki Kamiyama was supported by JST, PRESTO Grant Number JPMJPR1753, Japan.
Projects no. NKFI-128673 and no. ED\_18-1-2019-0030 (Application-specific highly reliable IT solutions) have been implemented with the support provided from the National Research, Development and Innovation Fund of Hungary, financed under the FK\_18 and the Thematic Excellence Programme funding schemes, respectively. 
The work was supported by the Research Institute for Mathematical Sciences, an International Joint Usage/Research Center located in Kyoto University.

\bibliographystyle{abbrv}
\bibliography{efxbib}

\begin{thebibliography}{10}

\bibitem{aleksandrov2019greedy}
M.~Aleksandrov and T.~Walsh.
\newblock Greedy algorithms for fair division of mixed manna.
\newblock {\em arXiv preprint arXiv:1911.11005}, 2019.

\bibitem{amanatidis2020maximum}
G.~Amanatidis, G.~Birmpas, A.~Filos-Ratsikas, A.~Hollender, and A.~A.
  Voudouris.
\newblock Maximum {N}ash welfare and other stories about {EFX}.
\newblock {\em arXiv preprint arXiv:2001.09838}, 2020.

\bibitem{aziz2016optimal}
H.~Aziz, P.~Bir{\'o}, J.~Lang, J.~Lesca, and J.~Monnot.
\newblock Optimal reallocation under additive and ordinal preferences.
\newblock In {\em Proceedings of the 2016 International Conference on
  Autonomous Agents \& Multiagent Systems}, pages 402--410, 2016.

\bibitem{aziz2018fair}
H.~Aziz, I.~Caragiannis, A.~Igarashi, and T.~Walsh.
\newblock Fair allocation of combinations of indivisible goods and chores.
\newblock {\em arXiv preprint arXiv:1807.10684}, 2018.

\bibitem{aziz2019fair}
H.~Aziz, I.~Caragiannis, A.~Igarashi, and T.~Walsh.
\newblock Fair allocation of indivisible goods and chores.
\newblock In {\em Proceedings of the 28th International Joint Conference on
  Artificial Intelligence (IJCAI)}, pages 53--59, 2019.

\bibitem{aziz2015fair}
H.~Aziz, S.~Gaspers, S.~Mackenzie, and T.~Walsh.
\newblock Fair assignment of indivisible objects under ordinal preferences.
\newblock {\em Artificial Intelligence}, 227:71--92, 2015.

\bibitem{aziz2019almost}
H.~Aziz and S.~Rey.
\newblock Almost group envy-free allocation of indivisible goods and chores.
\newblock {\em arXiv preprint arXiv:1907.09279}, 2019.

\bibitem{barman2017approximation}
S.~Barman and S.~K. Krishna~Murthy.
\newblock Approximation algorithms for maximin fair division.
\newblock In {\em Proceedings of the 2017 ACM Conference on Economics and
  Computation}, pages 647--664, 2017.

\bibitem{barman2018finding}
S.~Barman, S.~K. Krishnamurthy, and R.~Vaish.
\newblock Finding fair and efficient allocations.
\newblock In {\em Proceedings of the 2018 ACM Conference on Economics and
  Computation}, pages 557--574, 2018.

\bibitem{benabbou2020finding}
N.~Benabbou, M.~Chakraborty, A.~Igarashi, and Y.~Zick.
\newblock Finding fair and efficient allocations when valuations don't add up.
\newblock {\em arXiv preprint arXiv:2003.07060}, 2020.

\bibitem{berliant1992fair}
M.~Berliant, W.~Thomson, and K.~Dunz.
\newblock On the fair division of a heterogeneous commodity.
\newblock {\em Journal of Mathematical Economics}, 21(3):201--216, 1992.

\bibitem{bogomolnaia2019dividing}
A.~Bogomolnaia, H.~Moulin, F.~Sandomirskiy, and E.~Yanovskaia.
\newblock Dividing bads under additive utilities.
\newblock {\em Social Choice and Welfare}, 52(3):395--417, 2019.

\bibitem{brams2017maximin}
S.~J. Brams, D.~M. Kilgour, and C.~Klamler.
\newblock Maximin envy-free division of indivisible items.
\newblock {\em Group Decision and Negotiation}, 26(1):115--131, 2017.

\bibitem{brams1996fair}
S.~J. Brams and A.~D. Taylor.
\newblock {\em Fair Division: From cake-cutting to dispute resolution}.
\newblock Cambridge University Press, 1996.

\bibitem{brandt2016handbook}
F.~Brandt, V.~Conitzer, U.~Endriss, J.~Lang, and A.~D. Procaccia.
\newblock {\em Handbook of computational social choice}.
\newblock Cambridge University Press, 2016.

\bibitem{budish2011combinatorial}
E.~Budish.
\newblock The combinatorial assignment problem: Approximate competitive
  equilibrium from equal incomes.
\newblock {\em Journal of Political Economy}, 119(6):1061--1103, 2011.

\bibitem{CGH19}
I.~Caragiannis, N.~Gravin, and X.~Huang.
\newblock Envy-freeness up to any item with high {N}ash welfare: The virtue of
  donating items.
\newblock In {\em Proceedings of the 2019 ACM Conference on Economics and
  Computation}, pages 527--545, 2019.

\bibitem{caragiannis2012efficiency}
I.~Caragiannis, C.~Kaklamanis, P.~Kanellopoulos, and M.~Kyropoulou.
\newblock The efficiency of fair division.
\newblock {\em Theory of Computing Systems}, 50(4):589--610, 2012.

\bibitem{caragiannis2016unreasonable}
I.~Caragiannis, D.~Kurokawa, H.~Moulin, A.~D. Procaccia, N.~Shah, and J.~Wang.
\newblock The unreasonable fairness of maximum {N}ash welfare.
\newblock In {\em Proceedings of the 2016 ACM Conference on Economics and
  Computation}, pages 305--322, 2016.

\bibitem{chaudhury2020efx}
B.~R. Chaudhury, J.~Garg, and K.~Mehlhorn.
\newblock {EFX} exists for three agents.
\newblock {\em arXiv preprint arXiv:2002.05119}, 2020.

\bibitem{CKMS20}
B.~R. Chaudhury, T.~Kavitha, K.~Mehlhorn, and A.~Sgouritsa.
\newblock A little charity guarantees almost envy-freeness.
\newblock In {\em Proceedings of the 2020 ACM-SIAM Symposium on Discrete
  Algorithms}, pages 2658--2672, 2020.

\bibitem{chen2020fairness}
X.~Chen and Z.~Liu.
\newblock The fairness of leximin in allocation of indivisible chores.
\newblock {\em arXiv preprint arXiv:2005.04864}, 2020.

\bibitem{conitzer2019group}
V.~Conitzer, R.~Freeman, N.~Shah, and J.~W. Vaughan.
\newblock Group fairness for the allocation of indivisible goods.
\newblock In {\em Proceedings of the AAAI Conference on Artificial
  Intelligence}, volume~33, pages 1853--1860, 2019.

\bibitem{DOWLING1984267}
W.~F. Dowling and J.~H. Gallier.
\newblock Linear-time algorithms for testing the satisfiability of
  propositional {H}orn formulae.
\newblock {\em The Journal of Logic Programming}, 1(3):267 -- 284, 1984.

\bibitem{freeman2019equitable}
R.~Freeman, S.~Sikdar, R.~Vaish, and L.~Xia.
\newblock Equitable allocations of indivisible goods.
\newblock In {\em Proceedings of the 28th International Joint Conference on
  Artificial Intelligence}, pages 280--286. AAAI Press, 2019.

\bibitem{FSVX20}
R.~Freeman, S.~Sikdar, R.~Vaish, and L.~Xia.
\newblock Equitable allocations of indivisible chores.
\newblock In {\em Proceedings of the 19th International Conference on
  Autonomous Agents and MultiAgent Systems}, pages 384--392, 2020.

\bibitem{gourves2014near}
L.~Gourv{\`e}s, J.~Monnot, and L.~Tlilane.
\newblock Near fairness in matroids.
\newblock In {\em Proceedings of the Twenty-first European Conference on
  Artificial Intelligence}, pages 393--398, 2014.

\bibitem{hammer1993optimal}
P.~L. Hammer and A.~Kogan.
\newblock Optimal compression of propositional {H}orn knowledge bases:
  complexity and approximation.
\newblock {\em Artificial Intelligence}, 64(1):131--145, 1993.

\bibitem{kyropoulou2019almost}
M.~Kyropoulou, W.~Suksompong, and A.~A. Voudouris.
\newblock Almost envy-freeness in group resource allocation.
\newblock In {\em Proceedings of the 28th International Joint Conference on
  Artificial Intelligence}, pages 400--406. AAAI Press, 2019.

\bibitem{lipton2004approximately}
R.~J. Lipton, E.~Markakis, E.~Mossel, and A.~Saberi.
\newblock On approximately fair allocations of indivisible goods.
\newblock In {\em Proceedings of the 5th ACM conference on Electronic
  commerce}, pages 125--131, 2004.

\bibitem{plaut2018almost}
B.~Plaut and T.~Roughgarden.
\newblock Almost envy-freeness with general valuations.
\newblock In {\em Proceedings of the Twenty-Ninth Annual ACM-SIAM Symposium on
  Discrete Algorithms}, pages 2584--2603. SIAM, 2018.

\bibitem{plaut2019communication}
B.~Plaut and T.~Roughgarden.
\newblock Communication complexity of discrete fair division.
\newblock In {\em Proceedings of the Thirtieth Annual ACM-SIAM Symposium on
  Discrete Algorithms}, pages 2014--2033. SIAM, 2019.

\end{thebibliography}

\end{document}